\documentclass{article}
\usepackage[titletoc]{appendix}

\usepackage[utf8]{inputenc}
\usepackage{geometry}
\geometry{left=2.8cm, right=2.8cm, top=2cm, bottom=2cm}

\usepackage[usenames,dvipsnames]{xcolor}
\usepackage{tikz}


\usepackage{amssymb,amsfonts,amsthm,amsmath,thmtools}
\usepackage{enumerate}
\usepackage{mathrsfs,mathtools}
\usepackage{titlesec}
\usepackage{booktabs, tabularx} 
\usepackage{graphicx} 
\usepackage{bbm}
\usepackage[hidelinks]{hyperref}
\hypersetup{
  colorlinks,
  citecolor=Violet,
  linkcolor=Black,
  urlcolor=Blue}

\usepackage{cleveref}
\usepackage{wrapfig}
\usepackage{changepage}
\def\M{{M}}
\def\K{{K}}
\def\tcK{{\tilde K}}
\def \cA {\mathcal{A}}
\def \cB {\mathcal{B}}

\def \cC {{C}}

\def \cD {\mathcal{D}}
\def \cE {\mathcal{E}}
\def \cF {\mathcal{F}}
\def \cG {\mathcal{G}}
\def \cH {\mathcal{H}}
\def \cI {\mathcal{I}}
\def \cJ {\mathcal{J}}
\def \cK {\mathcal{K}}
\def \cL {\mathcal{L}}
\def \cM {\mathcal{M}}
\def \cN {\mathcal{N}}

\def \cQ {\mathcal{Q}}

\def \bE {\mathbb{E}}
\def \bF {\mathbb{F}}

\def \bN {\mathbb{N}}

\def \bP {\mathbb{P}}
\def \P {\mathbb{P}}
\def \bQ {\mathbb{Q}}
\def \R {\mathbb{R}}

\def \bT {\mathbb{T}}

\def\1{{\mathbf 1}}

\def \sC {\mathscr{C}}

\def\cone{\operatorname{cone}}
\def\conv{\operatorname{conv}}

\DeclareMathOperator*{\esssup}{ess\,sup}

\let\phi\varphi
\let\epsilon\varepsilon
\def \eps {\epsilon}

\usepackage{xspace}

\newcommand{\one}[1]{\mathbbm{1}_{#1}}
\newcommand{\ind}[1]{\mathbbm{1}_{\left\{{#1} \right\}}}

\newcommand{\condE}[2]{\mathbb{E}\left[ \left. #1 \, \right| #2 \right]}
\newcommand{\E}{\mathbb{E}}
\newcommand{\V}{\mathbf{V}}
\def\tV{{\tilde\V}}
\def\tv{{\tilde v}}

\newcommand{\wt}[1]{ \widetilde{ #1 }}
\newcommand{\wh}[1]{ \widehat{ #1 }}

\newcommand{\for}{ \text{ for }}
\newcommand{\q}[1]{\quad \text{ #1 } \quad}

\def\qiq{\quad \implies \quad}

\newcommand{\ul}[1]{{#1}}
\newcommand{\ol}[1]{\overline{#1}}

\renewcommand \emptyset \varnothing

\def\cprod{{\sC}}





\newtheorem{thm}{Theorem}[section]
\newtheorem{theorem}[thm]{Theorem}
\newtheorem{cor}[thm]{Corollary}
\newtheorem{prop}[thm]{Proposition}
\newtheorem{lem}[thm]{Lemma}
\newtheorem{lemma}[thm]{Lemma}

\newtheorem{remark}[thm]{Remark}
\theoremstyle{definition}
\newtheorem{defn}[thm]{Definition}

\newtheorem{example}[thm]{Example}

\theoremstyle{remark}

\renewcommand\thmcontinues[1]{Continued}

\title{ Multi-currency reserving for coherent risk measures}
\usepackage{authblk}

\author[1,2]{Saul Jacka\thanks{Saul D. Jacka gratefully acknowledges funding received from the EPSRC grant EP/P00377X/1 and is also grateful to the Alan Turing Institute for their financial support under the EPSRC 
grant EP/N510129/1. E-mail: \textit{s.d.jacka@warwick.ac.uk}\\Department of Statistics\\University of Warwick\\Coventry CV4 7AL, UK}}
\author[1]{Seb Armstrong \thanks{Seb Armstrong gratefully acknowledges funding received from the EPSRC Doctoral Training Partnerships grant EP/M508184/1. E-mail: \textit{s.armstrong@warwick.ac.uk}\\Department of Statistics\\University of Warwick\\Coventry CV4 7AL, UK}}
\author[3]{Abdelkarem Berkaoui \thanks{E-mail: \textit{berkaoui@yahoo.fr}\\College of Sciences\\Al-Imam Mohammed Ibn Saud Islamic University\\P.O. Box 84880\\Riyadh 11681\\Saudi Arabia\\}}

\affil[1]{University of Warwick}
\affil[2]{Alan Turing Institute}
\affil[3]{Imam Muhammad ibn Saud Islamic University}

\begin{document}

\maketitle

\begin{abstract}
We examine the problem of dynamic reserving for risk in multiple currencies under a general coherent risk measure.  The reserver requires to hedge risk in a time-consistent manner by trading in baskets of currencies. We show that reserving portfolios in multiple currencies $\V$ are time-consistent when (and only when) a generalisation of  Delbaen's m-stability condition \cite{D06}, termed optional $\V$-m-stability, holds. We prove a version of the Fundamental Theorem of Asset Pricing in this context. We show that this problem is equivalent to dynamic trading across baskets of currencies (rather than just pairwise trades) in a market with proportional transaction costs and with a frictionless final period.
\end{abstract}

\section{Introduction}

Coherent risk measures (CRMs) were introduced in \cite{ADEH99}. A key example was  based on  the Chicago Mercantile Exchange's margin requirements. The Basel III accords mandate the use of Average Value at Risk (a coherent risk measure unlike the widely-used Value at Risk (VaR)measure, which is not coherent) for reserving risk-capital for certain derivatives-based liabilities \cite{BCBS}. Many financial institutions have regulatory or other reasons for testing their reserves and  a dynamic version of coherent risk measures is a model for this process. 

In \cite{JBA} we outlined an approach to reserving for risk based on CRM's. The potential drawback of reserving with CRM's, as has been pointed out repeatedly, is the problem of time-consistency (see, for example \cite{Biel} and references therein): one can view the time-$t$ reserve for a liability payable at a later time $T$ as itself a liability, payable at time $t$. A serial version of this shows that (for example) a regulator who imposes the reserving requirements implicit in the CRM  is actually requiring a sequence of reserves $\rho_t(X)$ -- one at each time-point where reserves are audited-- for a liability $X$, and consequently it can be argued that one actually needs an initial reserve of $\rho_0 \circ \dots \circ \rho_{T-1}(X)$. Delbaen \cite{D06} gave a necessary and sufficient condition, termed multiplicative stability (henceforth m-stability), for this latter quantity to equal $\rho_0(X)$, which does not hold in general, although the inequality
\[ \rho_0 \circ \dots \circ \rho_{T-1}(X) \ge \rho_0(X)\] 
does. In particular, Average- or Tail-Value at Risk (also known as Expected Shortfall) is not, in general, time-consistent.

It is normally assumed, in the context of CRM's, that assets and liabilities are discounted to time-0 values. Since CRM's are measures of {\em monetary} risk for amounts payable at time $t$, we think it is clearer to take the {\em prospective view} that liabilities are expressed in terms of time-$T$ units and so at time 0, the risk or reserve is expressed in terms of units of a zero-coupon bond  (or currency) payable at $T$. Of course, as soon as one adopts this approach it is clear that our assets need not just correspond to the unit of account and we should consider the possibility of holding multiple currencies or assets to perform the reserving function. In \cite{JBA} we showed how multiple currencies allowed the possibility of an extended version of time-consistency: predictable $\V$-time consistency. We envisaged a set of assets numbered $0,1,\dots, d$ with random terminal values $\V = (v^0, v^1,\dots, v^d)$ (given in the distinguished unit of account) and gave a necessary and sufficient condition (Theorem 2.15 of \cite{JBA}) for time-consistent, multi-asset reserving to work for any specific CRM.

Examples of CRMs include superhedging prices in incomplete frictionless markets and (as we shall see) minimal hedging endowments in markets with proportional transaction costs.

In this paper we consider a stronger version of multi-asset time-consistency which corresponds to explicitly adjusting portfolios (and which therefore seems appropriate to situations where trading of the assets held as reserves is possible) which includes both these situations. We term this version \emph{optional time-consistency}, and see it as the appropriate setting for many situations, including those mentioned above.

We shall give  necessary and sufficient conditions for $\cA$, the cone of acceptable claims corresponding to a CRM, $\rho$, to be expressible as the (closure in the appropriate topology of the)  sum, over times $t$, of trades in the underlying assets which  are acceptable at time $t$ (\Cref{thm:main}). We will then show that under this condition we obtain a version of the Fundamental Theorem of Asset Pricing for CRMs (\Cref{thm:FTAP}). Finally, in \Cref{thm:tradingCone} we shall show the equivalence between optionally time-consistent CRMs and a generalisation (corresponding to permitting trades in baskets of assets) of the models for trading with proportional transaction costs introduced by Jouini and Kallal \cite{JK}, developed by Cvitanic and Karatzas \cite{cvitanic1996hedging},  Kabanov \cite{kabanov1999hedging}, Kabanov and Stricker (see \cite{kabanov2001harrison}) and further studied by Schachermayer \cite{schachermayer2004fundamental} and Jacka, Berkaoui and Warren \cite{JBW},  amongst others. For more recent developments see  Bielecki, Cialenco and Rodriguez \cite{Biel2} and their survey paper  \cite{Biel}.

\section{Preliminaries}

Insurers reserve for future financial risks  by investing in suitably prudent and sufficiently liquid assets, typically bonds, or any other asset universally agreed always to  hold positive value. We call such assets num\'eraires, examples of which include paper assets, such as currencies, and physical commodities.  Reserving a sufficient amount ensures that the risk carried by the insurer is acceptable to the insurer and (possibly) to regulatory authorities, customers and their agents. In some circumstances, the choice of num\'eraire is  clear; in others, it is not,  for example when insurers reserve for claims in multiple currencies. It is common to calculate reserves by a \lq\lq prudent'' calculation of expected value in a pessimistic or \lq\lq worst realistic case'' scenario. We assume that the minimal amount sufficient to form the reserve is modelled by a coherent risk measure (CRM); see \cite{FS11} for an introduction to CRM's. 

We assume the availability of a finite 
collection of num\'eraires numbered $(0,\ldots,d)$.
We examine the problem of reserving for a risk at a terminal time $T$, through adjusting the reserving portfolio held in  the num\'eraire \lq\lq currencies'' at discrete times $t=0,1,\dots,T$. The terminal value of the num\'eraires (in units of account) is denoted $\V = (v^0,v^1,\dots,v^d)$, and we assume that each $v^i$ is a strictly positive, $\cF_T$-measurable, bounded random variable, with the Euclidean norm of $\V$ bounded away from 0. 
Thus, to value any portfolio $Y$ of holdings in the elements in $\V$, we take the inner product $Y\cdot \V$ and, conversely, any bounded $X$ may be written in the form $Y\cdot \V$ with $Y$ bounded . We regard the portfolio $Y$ as corresponding to a {\em liability} of   $Y\cdot \V$ at terminal time $T$.

A coherent risk measure is a reserving mechanism: we  assume that an insurer is  reserving for risk according to a conditional coherent risk measure $\rho_t$, at each time $t$. They reserve  the amount $\rho_t (X)$ for a random claim $X$. Thus the aggregate position of holding the risky claim $X$ and reserving adequately should always be acceptable to the insurer. The set $\cA_t$ of acceptable claims at time $t$ consists of those $\cF_T$-measurable bounded random variables with non-positive $\rho_t$. We shall say that the portfolio $Y_t$ reserves at time $t$ for a claim $X$ if
\[ X - Y_t \cdot \V \in \cA_t.\]

\subsection{Notation}

We fix a terminal time $T\in \bN$, a discrete time set $\bT := \{0,1,\dots, T\} $. We fix a probability space $(\Omega,\cF, \P)$, where $\bP$ is the \emph{reference measure} or \emph{objective measure}. The filtration $(\cF_t)_{t\in \bT}$ describes the information available at each time point. The space of all  $\cF$-measurable random variables is denoted $L^0 = L^0(\Omega,\cF,\P)$; we denote $ L^0(\Omega,\cF_t,\P)$ by $L^0_t$. The space of all $\P$-integrable (respectively $\P$-essentially bounded) $\cF_t$-measurable random variables is $L^1_t$ (resp. $L^\infty_t$). The space of $\cF_t$-measurable (respectively integrable; essentially bounded) $\R^{d+1}$-valued random variables is denoted $\cL^0_t= L^0(\Omega,\cF_t,\P;\R^{d+1})$ (resp. $\cL^1_t$; $\cL^\infty_t$). A subscript `+' denotes the positive orthant of a space, and `++' denotes strict positivity; for example, the set of non-negative (respectively strictly positive) essentially bounded random variables is $L^\infty_+$ (resp. $L^\infty_{++}$). Similarly, a subscript \lq -' denotes the negative orthant of a space. 

\subsection{Definitions and well-known results}
We recall some definitions and concepts.
At each time $t \in \bT$, we wish to price monetary risks using all information available at that time. The following definition is adapted from \cite{DS05}:
\begin{defn}\label{defn:ccrmp}
A map $\rho_t : L^\infty \to L^\infty_t$ for $t \in \bT$ is a \emph{conditional convex risk measure} if, for all $X, Y \in L^\infty$, it has the following properties:
\begin{itemize}
\item
Conditional cash invariance: for all $m \in L^\infty_t$,
\[ \rho_t (X+ m ) = \rho_t(X) + m\qquad  \P\text{-almost surely;}\]
\item
Monotonicity: if $X \le Y$ ${\P}$-almost surely, then $\rho_t(X) \le \rho_t(Y)$;
\item
Conditional convexity: for all $\lambda \in L^\infty_t$ with $0 \le \lambda \le 1$,
\[ \rho_t(\lambda X + (1- \lambda)Y) \le \lambda \rho_t(X) + (1-\lambda)\rho_t(Y) \qquad  \P\text{-almost surely;}\]
\item
Normalisation: $\rho_t(0)=0$ ${\P}$-almost surely.
\end{itemize}
Furthermore, a conditional convex risk measure is called \emph{coherent} if it also satisfies 
\begin{itemize}
\item
Conditional positive homogeneity: for all $\lambda\in L^\infty_t$ with $\lambda \ge 0$,
\[ \rho_t(\lambda X)= \lambda\rho _t (X) \qquad  \P\text{-almost surely.} \]
\end{itemize}
\end{defn}
Our interest lies chiefly in reserving for and pricing liabilities. We see a positive random variable $X$ as a liability to be hedged, and a negative $X$ as a credit, which explains our (non-traditional) choice of sign in the cash invariance property, and the direction of monotonicity.
\begin{defn}
A convex risk measure satisfies the \emph{Fatou property} if, for any bounded sequence $(X^n)_{n \ge 1} \subset L^\infty$ converging to $X\in L^\infty$ in probability, we have 
\[ \rho_t(X) \le \liminf_{n\to \infty} \rho_t(X^n).\]
\end{defn}
  The Fatou property is equivalent to \emph{continuity from above}: $\rho_t$ is continuous from above if, whenever $(X^n)_{n \ge 1} \subset L^\infty$ is a non-increasing sequence such that $X^n \to X$ $\P$-a.s., then
\[ \rho_t(X^n) \to \rho_t(X) \qquad \P\text{-a.s. as }n \to \infty \]
As shown in \cite{ADEH99}, whenever the Fatou property holds, we may represent $\rho_t$ as
\[ \rho_t(X)=\esssup_{ \bQ \in \cQ} \bE_{ \bQ}[X| \cF_t] .\]
The representing set of probability measures $\cQ$ is dominated by $\P$, that is, that $\bQ \ll \P$ for any $\bQ \in \cQ$. In this and the subsequent sections we identify the probability measures $\bQ$ of the set $\cQ$ with their Radon-Nikodym derivative, or density, with respect to $\bP$, $\frac{d\bQ}{d\P}$. We trust that which version is being used will be clear from the context. We will also denote the density of $\bQ$ by $\Lambda^{\bQ}$ and the density of the restriction of $\bQ$ to $\cF_t$ by $\Lambda^{\bQ}_t$. Recall that the resulting process is a $\bP$-martingale.

The \emph{time-$t$ acceptance set} of a conditional convex pricing measure $\rho_t : L^\infty \to L^\infty_t$ is 
\[ \cA_t = \{ X \in L^\infty : \rho_t(X) \le 0\}. \]

For the following results, we refer the reader to \cite{FS11} and \cite{DS05}. We equip the space $ L^\infty$ with the weak$^*$ topology $\sigma( L^\infty,  L^1)$, so  that the topological dual will be $ L^1$. Recall that a set $\cC$ of claims is arbitrage-free whenever
\[ \cC \cap L^\infty_+ \subseteq\{ 0\},\]
or more precisely $\cC \cap L^\infty_+ \subseteq\{ [0]\},$ where $[0]$ denotes the equivalence class of random variables which are almost surely equal to 0.
\begin{prop}\label{prop:background}
Define $(\cA_t)_{0\leq t\leq T}$ to be the acceptance sets of the dynamic conditional \emph{coherent} pricing measure  $\rho_t : L^\infty \to L^\infty_t$ satisfying the Fatou property and set $\cA_0=\cA$.

Then  $ \cA_t$ is a weak$^*$-closed\footnote{in $ L^\infty$, i.e., $\cA_t$ is closed in the topology $\sigma( L^\infty_t,  L^1_t)$} convex cone that is  stable under multiplication by bounded positive $ \cF_t$-measurable random variables, contains  $ L^\infty_- $, and is arbitrage-free.
\end{prop}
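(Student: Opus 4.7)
The proposition packages four standard properties of the acceptance set $\cA_t := \{X \in L^\infty : \rho_t(X) \le 0\}$ of a coherent $\rho_t$ with the Fatou property, and my plan is simply to verify them one at a time against Definition~\ref{defn:ccrmp}; only weak*-closedness demands genuine work.

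For the algebraic content, combining conditional convexity with conditional positive homogeneity gives that if $X, Y \in \cA_t$ and $\alpha, \beta \in L^\infty_{t,+}$, then
\[\rho_t(\alpha X + \beta Y) \le \alpha \rho_t(X) + \beta \rho_t(Y) \le 0,\]
which simultaneously yields the convex-cone structure and the stability under multiplication by non-negative bounded $\cF_t$-measurable scalars. Monotonicity combined with the normalisation $\rho_t(0) = 0$ gives $L^\infty_- \subseteq \cA_t$ immediately.

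For weak*-closedness I would invoke the Grothendieck--Delbaen characterisation: a convex cone in $L^\infty$ containing $-L^\infty_+$ is $\sigma(L^\infty, L^1)$-closed if and only if it is closed under bounded convergence in probability. The Fatou property, in the formulation adopted in the excerpt, is exactly this closure condition applied to $\cA_t$: if $(X^n) \subset \cA_t$ is uniformly bounded and $X^n \to X$ in probability, then $\rho_t(X) \le \liminf_n \rho_t(X^n) \le 0$, whence $X \in \cA_t$. This appeal is the one non-trivial step in the proof, and would normally be reduced via Krein--\v{S}mulian to the bounded balls $B_r \subset L^\infty$ before quoting the characterisation.

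For arbitrage-freeness, take $X \in \cA_t \cap L^\infty_+$: monotonicity and normalisation force $0 = \rho_t(0) \le \rho_t(X) \le 0$, so $\rho_t(X) = 0$. The dual representation supplied by the Fatou property then yields $\bE_\bQ[X \mid \cF_t] = 0$ for every $\bQ$ in the representing set $\cQ$, and the tacit sensitivity hypothesis standard in this literature --- that $\cQ$ contains some $\bQ \sim \bP$ --- forces $X = 0$ a.s. The main obstacle throughout is the correct appeal to the Grothendieck--Delbaen theorem; everything else is a one-line consequence of the axioms.
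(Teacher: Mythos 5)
The paper does not prove this proposition: immediately before stating it, the text simply refers the reader to \cite{FS11} and \cite{DS05}. Your proof is the standard textbook argument that those references supply, so there is no substantive divergence of method to compare.

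Two remarks worth recording. First, your invocation of the Grothendieck characterisation is the right move and is exactly what Delbaen and F\"ollmer--Schied do: the Fatou property says $\cA_t$ is closed under bounded $L^0$-convergence, and Krein--\v{S}mulian upgrades this to $\sigma(L^\infty,L^1)$-closedness once one restricts to balls. You would want to make explicit that the Fatou inequality is being applied to a \emph{bounded} sequence, which matches the paper's own phrasing of the Fatou property, so no gap there. Second --- and this is the more interesting point --- you are right that arbitrage-freeness does \emph{not} follow from \Cref{defn:ccrmp} plus the Fatou property alone. With $\rho_t(X)=\bE_\bQ[X\mid\cF_t]$ for a fixed $\bQ\ll\P$ that is not equivalent to $\P$, one gets a perfectly good conditional CRM with the Fatou property whose acceptance set contains nontrivial elements of $L^\infty_+$ (any indicator of a $\P$-nonnull, $\bQ$-null event). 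The paper's statement of \Cref{prop:background} therefore leans on a tacit \emph{relevance} (or \emph{sensitivity}) hypothesis --- that the representing set $\cQ$ contains a measure equivalent to $\P$, or equivalently that $\rho_t(X)>0$ whenever $X\in L^\infty_+\setminus\{0\}$ --- which the text never states but is standard in this literature (and is needed anyway for the Bipolar and FTAP arguments later in the paper). Flagging that as an unstated hypothesis, as you did, is the correct reading; it is not a defect in your proof.
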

\begin{remark} In general, dynamic conditional coherent risk measures are defined on a stream of payments $(X_t)_{0\leq t\leq T}$. However the standard trick of extending the reference measure $\P$ to a probability measure on $\Omega\times \{0,\ldots T\}$ enables us to reduce our study to the single payment case (see section 3.3 of \cite{JBA} for more details of how this is done).
\end{remark}

\section{Optional representation and multi-currency time consistency}

\subsection{Time-consistency}

An insurer who has insured the claim $X$ needs to hold a sequence of portfolios $Y_0,Y_1,\dots, Y_T$ (one for each time point at which a reserve calculation is to be made) so that the risk is adequately reserved for, and so that no unacceptable risk is assumed in any one exchange of portfolios. That is to say, in the optional case, from time $t-1$ until just before time $t$, the insurer holds a portfolio $Y_{t-1} \in \cL^\infty(\cF_{t-1})$  of the num\'eraires as an acceptable reserve for $X$, and will wish to exchange to a new reserving portfolio $Y_{t}$. The insurer may only exchange to the new portfolio $Y_t$ if the risk of the adjustment is acceptable, i.e. $\rho_t(Y_t\cdot \V - Y_{t-1}\cdot \V) \le 0$. Thus all the transfer of risk occurs instantaneously at time $t$ (we shall see in section \ref{proptran} that the analogy with a trading set-up is no coincidence).  This is  in contrast to the predictable case developed in \cite{JBA}, where the idea is that the time-$(t-1)$ reserve is an adequate reserve for the hedging portfolio needed at time $t$. In the predictable case, the acceptable risk is carried between the time points $t-1$ and $t$, whereas in the optional case an explicit exchange of known amounts of the num\'eraires needs to take place at time $t$ to update the reserve portfolio. 

We shall say that the dynamic risk measure is (optionally) $\V$-time-consistent if this property holds (at least in a limiting sense) for each claim $X$, starting from an initial reserve $\rho_0(X)$.

\begin{defn} A dynamic convex risk measure $\rho=(\rho_t)_{t=0,\dots,T}$ is \emph{optionally $\mathbf{V}$-time-consistent} if,  for any $X\in \cA$, we may find a sequence $X^n$ in $\cA$ and a sequence $\pi^n =(\pi^n_t)_{t=0,\dots,T-1}$ such that $\pi^n_t \in \cL^\infty(\cF_t)$ for each $t$, and
\begin{enumerate}[(i)]
\item
\begin{equation}\label{c1} X^n\rightarrow X \text{ almost surely}
\end{equation}
\item
for each $t$,
\[\rho_t(\pi^n_t \cdot\mathbf{V})\le 0 \; \bP \text{ a.s.}; \]
\item
for each $n \in \bN$,
\[\sum_{t=0}^{T-1} \pi^n_t \cdot\mathbf{V} = X^n \quad  \P\text{-almost-surely. }\]
\end{enumerate}
\end{defn}
\begin{remark}
By the subsequence property, we can replace the almost sure convergence in (\ref{c1}) by convergence in $L^0$ without affecting the definition.
We think of $\sum_0^t \pi^n_t$ as the reserve held at time $t$ for the liability $X^n$.
\end{remark}

\subsection{Representability of claims}
\label{sec:representability}

We may view optional $\V$-time-consistency as a condition on the sequences of portfolios that can superhedge a claim $X$. Given a $\V$-time-consistent dynamic CRM  $(\rho_t)$, we may (at least in a limiting sense) express $X$ as the sum of the initial reserve $\rho_0(X)$ and the $(T+1)$ adjustments at times $0,1,\dots T$ (where we set $Y_{-1}$ to be any vector in $\cL_0$ with $\rho_0(Y.\V)=\rho_0(X)$, 
for example $\frac{\rho_0(X)}{\rho_0(\V .1)}\1$):

\[ 
X = \rho_0(X)  + \sum_{t=0}^{T} (Y_{t} - Y_{t-1}) \cdot\mathbf{V}, 
\]
where each adjustment satisfies $\rho_t((Y_{t} - Y_{t-1}) \cdot\mathbf{V}) \le 0$. Each adjustment $Y_{t} - Y_{t-1}$ is an $\cF_t$-measurable portfolio with $t$-acceptable valuation; we call the set of such portfolios $K_t(\cA,\V)$. We seek to answer the question \lq\lq Is it possible to represent every claim in $\cA$ by a series of such adjustments?'' 

Given any cone $\cD $ in $\ul L^\infty$ and our vector $\mathbf{V}$ of num\'eraires, we define the collection of portfolios attaining $\cD $ to be
\[ \cD (\mathbf{V}) = \{ Y \in \cL^\infty: Y\cdot \mathbf{V}\in \cD  \}.\]
The set of time-$t$ acceptable portfolios that are $\cF_{t}$-measurable is denoted
\begin{equation}\label{ktdef}
 K_t(\ul{\cA}, \mathbf{V}) := \ul{\cA}_t(\mathbf{V}) \cap \ul{\cL}^\infty(\cF_t).
\end{equation}
\begin{defn}\label{defn:predrep}The cone $\ul{\cA}$ in $L^\infty$ is said to be \emph{optionally $\V$-representable} if
\begin{equation}\label{repdef} \ul{\cA} (\mathbf{V}) = \ol{\oplus_{t=0}^{T} K_t(\ul{\cA}, \mathbf{V})},
\end{equation}
where the closure is taken in the weak$^*$ topology. If this is the case, we also say that $\ul{\cA}$ is \emph{optionally represented} by $\mathbf{V}$.
When $\mathbf{V}$ is fixed, we also say that $ \ul{\cA} (\mathbf{V}) $ is optionally represented if (\ref{repdef}) holds.
\end{defn}
\begin{remark}
It is an easy exercise to show that 
\begin{equation}\label{ex}K_t(\ul\cA,\mathbf{V}) = \{ X \in\ul \cL^\infty(\cF_{t},\R^{d+1}): \alpha X \in \cA \text{ for any } \alpha \in \ul L^\infty_+(\cF_t) \}.
\end{equation}
This characterisation is used repeatedly in what follows and in the proof of Theorem \ref{thm:crucialClaim}.
\end{remark}
From now on, where there is no ambiguity, we shall write $\K_t$ for $K_t(\ul\cA,\mathbf{V})$

\subsection{Stability}
We recall Delbaen's m-stability condition, on a standard stochastic basis $(\Omega,\cF,(\cF_t)_{t=0,\ldots,T},\P)$:
\begin{defn}[Delbaen \cite{D06}] A set of probability measures $\cQ \subset L^1(\Omega,\cF,\P)$ is \emph{m-stable} if for elements $\bQ^1, \bQ^2 \in \cQ$, with associated density martingales $\Lambda^{\bQ^1}_t = \condE{\frac{d\bQ^1}{d\P}}{\cF_t}$ and $\Lambda^{\bQ^2}_t = \condE{\frac{d\bQ^2}{d\P}}{\cF_t}$, and for each stopping time $\tau$, the martingale $L$ defined as
\[ L_t = \begin{cases}\Lambda^{\bQ^1}_t &\for t\le \tau \\  \frac{\Lambda^{\bQ^1}_\tau}{\Lambda^{\bQ^2}_\tau}\Lambda^{\bQ^2}_t &\for t \ge \tau \end{cases}\]
defines an element, $\bQ$, in $\cQ$. The probability measure $\bQ$ is also defined by the properties that
\[
\bQ\big|_{\cF_\tau}=\bQ^1\big|_{\cF_\tau}\text{ and }\bQ(\cdot|\cF_{\tau})=\bQ^2(\cdot|\cF_{\tau})\; \bP\text{a.s.},
\]
\end{defn}
\noindent so $\bQ$ pastes together the laws $\bQ^1$ and $\bQ^2$ at time $\tau$.

We generalise  m-stability by allowing extra freedom over one time period when pasting two measures together {\em and} by only pasting measures satisfying a consistency condition relating to $\mathbf{V}$:

\begin{defn}\label{def:optionalPastings} Let $\tau$ be a stopping time, and $\bQ^1,\bQ^2$ be two probability measures absolutely continuous with respect to $\bP$. The set $\bQ^1 \oplus^{\text{opt}}_\tau \bQ^2$ of \emph{optional pastings of $\bQ^1$ and $\bQ^2$} consists of all $\wt \bQ$ such that 
\begin{enumerate}[(i)]
\item $ \tilde\bQ \big|_{\cF_\tau}=\bQ^1 \big|_{\cF_\tau}$, 
\item[]and
\item for any $A \in \cF_T$, $\wt \bQ(A |\cF_{(\tau+1)\wedge T})=\bQ^2(A |\cF_{(\tau+1)\wedge T})$.
\end{enumerate}
\end{defn}
We make explicit the freedom over the time period $(\tau, \tau+1]$ by writing any optional pasting in terms of the two measures being pasted, and a ``one-step density'':
\begin{lem}
For $\tau$  a stopping time, and $\bQ^1,$ $\bQ^2$ two probability measures, 
\[ \bQ^1 \oplus^{\text{opt}}_\tau \bQ^2 = \left\{ \wt \bQ \ll \P : \Lambda^{\wt\bQ} = \Lambda^{\bQ^1}_\tau R \frac{\Lambda^{\bQ^2}}{\Lambda^{\bQ^2}_{(\tau+1)\wedge T}}, \quad \text{for some }R \in L^1_+(\cF_{(\tau+1)\wedge T}) \text{ s.t. } \condE{R}{\cF_\tau}=1  \right\}\]
\end{lem}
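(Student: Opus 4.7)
The plan is to verify the two set inclusions directly, using only the martingale property of the density processes together with the standard conditional Bayes formula $\wt\bQ(A\mid\cF_s)=\E[\one{A}\Lambda^{\wt\bQ}/\Lambda^{\wt\bQ}_s\mid\cF_s]$ (with the convention $0/0:=0$, which is harmless since the ratios only need to be defined $\wt\bQ$-a.s.).

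For the forward inclusion $(\subseteq)$, I would take $\wt\bQ\in \bQ^1\oplus^{\text{opt}}_\tau\bQ^2$ and set $R:=\Lambda^{\wt\bQ}_{(\tau+1)\wedge T}/\Lambda^{\bQ^1}_\tau$. Clearly $R\in L^1_+(\cF_{(\tau+1)\wedge T})$. Condition~(i) forces $\Lambda^{\wt\bQ}_\tau=\Lambda^{\bQ^1}_\tau$, so the $\bP$-martingale property of $\Lambda^{\wt\bQ}$ yields $\E[R\mid\cF_\tau]=\Lambda^{\wt\bQ}_\tau/\Lambda^{\bQ^1}_\tau=1$. Next, rewriting condition~(ii) through the Bayes formula gives
\[
\E\!\left[\one{A}\frac{\Lambda^{\wt\bQ}}{\Lambda^{\wt\bQ}_{(\tau+1)\wedge T}}\,\Big|\,\cF_{(\tau+1)\wedge T}\right]=\E\!\left[\one{A}\frac{\Lambda^{\bQ^2}}{\Lambda^{\bQ^2}_{(\tau+1)\wedge T}}\,\Big|\,\cF_{(\tau+1)\wedge T}\right]\qquad\forall A\in\cF_T,
\]
which, since both integrands are $\cF_T$-measurable and $A$ ranges over all of $\cF_T$, forces equality of those two ratios $\bP$-a.s. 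Multiplying by $\Lambda^{\wt\bQ}_{(\tau+1)\wedge T}=\Lambda^{\bQ^1}_\tau R$ produces exactly the claimed factorisation.

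For the reverse inclusion $(\supseteq)$, I would define $\wt\bQ$ via the prescribed density and check three things. First, $\wt\bQ$ is a probability measure: conditioning $\E[\Lambda^{\wt\bQ}]$ successively on $\cF_{(\tau+1)\wedge T}$ (which collapses the factor $\Lambda^{\bQ^2}/\Lambda^{\bQ^2}_{(\tau+1)\wedge T}$ to $1$ by the $\bP$-martingale property of $\Lambda^{\bQ^2}$) and on $\cF_\tau$ (which collapses $R$ to $1$) reduces the expectation to $\E[\Lambda^{\bQ^1}_\tau]=1$. Second, the same two-step conditioning gives $\Lambda^{\wt\bQ}_\tau=\Lambda^{\bQ^1}_\tau$, i.e.\ condition~(i). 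Third, conditioning only on $\cF_{(\tau+1)\wedge T}$ yields $\Lambda^{\wt\bQ}_{(\tau+1)\wedge T}=\Lambda^{\bQ^1}_\tau R$, so that $\Lambda^{\wt\bQ}/\Lambda^{\wt\bQ}_{(\tau+1)\wedge T}=\Lambda^{\bQ^2}/\Lambda^{\bQ^2}_{(\tau+1)\wedge T}$; plugging into the Bayes formula delivers condition~(ii).

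I do not expect any serious obstacle: the lemma is essentially an algebraic translation of Definition~\ref{def:optionalPastings} once one recognises that the constraint is on the $\cF_T$-measurable ratio of $\Lambda^{\wt\bQ}$ and its $\cF_{(\tau+1)\wedge T}$-restriction, while the freedom lives in the $\cF_{(\tau+1)\wedge T}$-measurable multiplier $R$ (normalised to a one-step transition density by $\E[R\mid\cF_\tau]=1$). The only mild technicality is handling the $\bP$-null sets on which the density martingales may vanish, for which the $0/0:=0$ convention (and the fact that $\wt\bQ\ll\bP$) suffices.
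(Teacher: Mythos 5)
The paper states this lemma without proof (it is not among the results proved in the appendix), so there is no authorial argument to compare against; you are supplying the verification the authors omit. Your route via the conditional Bayes formula and the martingale property of the density processes is the natural one, and the structure of both inclusions is sound: extract $R$ as the one-step ratio $\Lambda^{\wt\bQ}_{(\tau+1)\wedge T}/\Lambda^{\bQ^1}_\tau$, use condition (i) to get $\E[R\mid\cF_\tau]=1$, and use condition (ii) plus Bayes to pin down the $\cF_T/\cF_{(\tau+1)\wedge T}$ ratio; then run the same computations backwards for $\supseteq$.

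One point deserves more care than your closing remark gives it. On the $\cF_\tau$-measurable event $\{\Lambda^{\bQ^1}_\tau=0\}$, absorption of the nonnegative $\bP$-martingale $\Lambda^{\wt\bQ}$ (forced by condition (i)) gives $\Lambda^{\wt\bQ}_{(\tau+1)\wedge T}=0$ there as well, so with the $0/0:=0$ convention your candidate $R$ vanishes on that event, and then $\E[R\mid\cF_\tau]=0\neq 1$ on it. The $0/0$ convention is not enough by itself; you need to redefine $R$ on $\{\Lambda^{\bQ^1}_\tau=0\}$ (setting it to $1$ there, say, or to any $\cF_{(\tau+1)\wedge T}$-measurable positive variable with unit conditional expectation). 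This does not affect the claimed factorisation of $\Lambda^{\wt\bQ}$ since $\Lambda^{\bQ^1}_\tau R=0$ on that set regardless. A parallel remark applies in the $\supseteq$ direction on $\{\Lambda^{\bQ^2}_{(\tau+1)\wedge T}=0\}$, where $\E\bigl[\Lambda^{\bQ^2}/\Lambda^{\bQ^2}_{(\tau+1)\wedge T}\mid\cF_{(\tau+1)\wedge T}\bigr]$ is $0$ rather than $1$ under the $0/0$ convention; the lemma as stated implicitly requires either $R$ to vanish there or that set to be $\bP$-null. With those null-set caveats spelled out, the argument is complete and correct.
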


\begin{defn}\label{prop:equivstab}
The set of probability measures $\cQ$ is \emph{optionally $\V$-m-stable} if,   whenever $\tau$ is a stopping time,  $\bQ^1, \bQ^2 \in \cQ$, and $\wt \bQ \in \bQ^1 \oplus^{\text{opt}}_\tau \bQ^2$ has the (additional) property that
\begin{equation}\label{eq:pmstabcond} 
\bE_{\wt \bQ} [\V|\cF_\tau] = \bE_{ \bQ^1} [\V|\cF_\tau],
\end{equation}
then $\wt \bQ$ is also in $\cQ$.
\end{defn}
\begin{example}
It is easy to check that given an $(\cF_t)_{0\leq t \leq T}$-adapted and bounded process $X$, the collection, $\cQ_X$, of Equivalent Martingale Measures for $X$ is optionally $X_T$-m stable.  
\end{example}

\noindent Note that a set that is optionally 1-m-stable is automatically m-stable but the converse is false.

The following proposition gives an equivalent definition of optional $\V$-m-stability in terms of the dual cone $\cA(\V)^*$.

\begin{prop}\label{prop:eqstab} Suppose, without loss of generality that  the set of pricing measures, $\cQ$, is  convex and closed (so the set of densities is closed in the topology of $L^1$), and let $\cD  = \cA(\V)^*$. The following are equivalent:
\begin{enumerate}[(i)]
\item
$\cQ$ is optionally $\V$-m-stable
\item
for each $t\in \{0,1,\dots, T\}$, whenever $Y,W \in \cD $ are such that there exists $Z\in \cD $, an event $F\in \ul{\cF}_t$, positive random variables $\alpha, \beta \in \cL^0_+(\ul{\cF}_{t+1})$ with $\alpha Y, \beta W \in \ul{\cL}^1$ and $X:= \one{F} \alpha Y + \one{F^c} \beta W$ satisfies
\begin{equation} \label{eq:conestabcond}
\condE{X}{\ul{\cF}_t}=\condE{Z}{\ul{\cF}_t},
\end{equation}
 then $X$  is a member of $\cD $.
\end{enumerate}
\end{prop}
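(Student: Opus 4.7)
The starting point is the bipolar identification $\cD=\overline{\mathrm{cone}}\{\Lambda^{\bQ}\V : \bQ\in\cQ\}$, a consequence of the weak$^*$-closedness of $\cA$ and the representation $\cA=\{X\in L^\infty : \bE_\bQ[X]\le 0\text{ for all }\bQ\in\cQ\}$. Under this description, (i) and (ii) express the same stability---``one-step density deformation preserving $\cF_\tau$-conditional $\V$-expectations''---phrased respectively at the level of generating measures and at the level of cone elements. The closedness of $\cQ$ in $L^1$, together with the strict positivity and boundedness of $\V$, ensures that membership in $\cD$ of a generator $\Lambda^\bQ\V$ (with $\Lambda^\bQ$ having unit $\bP$-mean) transports back to membership of $\bQ$ in $\cQ$.

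For (ii)$\Rightarrow$(i): Given $\bQ^1,\bQ^2\in\cQ$, a stopping time $\tau$ and an optional pasting $\wt\bQ$ satisfying $\bE_{\wt\bQ}[\V|\cF_\tau]=\bE_{\bQ^1}[\V|\cF_\tau]$, I reduce to constant $\tau\equiv t$ by decomposing along atoms $\{\tau=s\}\in\cF_s$. The one-step density lemma gives $\Lambda^{\wt\bQ}=\Lambda^{\bQ^1}_tR\Lambda^{\bQ^2}/\Lambda^{\bQ^2}_{t+1}$ for some $R\in L^1_+(\cF_{t+1})$ with $\condE{R}{\cF_t}=1$. Set $Y=\Lambda^{\bQ^2}\V\in\cD$, $W=0$, $F=\Omega$, $\alpha=\Lambda^{\bQ^1}_tR/\Lambda^{\bQ^2}_{t+1}\in\cL^0_+(\cF_{t+1})$ and $Z=\Lambda^{\bQ^1}\V\in\cD$; then $X := \one{F}\alpha Y = \Lambda^{\wt\bQ}\V$, and the identities $\Lambda^{\wt\bQ}_t=\Lambda^{\bQ^1}_t$ together with $\bE_{\wt\bQ}[\V|\cF_t]=\bE_{\bQ^1}[\V|\cF_t]$ give $\condE{X}{\cF_t}=\Lambda^{\bQ^1}_t\bE_{\bQ^1}[\V|\cF_t]=\condE{Z}{\cF_t}$. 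Applying (ii), $\Lambda^{\wt\bQ}\V\in\cD$; strict positivity and boundedness of $\V$ upgrades this to $\Lambda^{\wt\bQ}\in\overline{\mathrm{cone}}\{\Lambda^\bQ\}$, and the unit-mean normalisation together with closedness of $\cQ$ in $L^1$ then forces $\wt\bQ\in\cQ$.

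For (i)$\Rightarrow$(ii): By continuity of the pasting map and closedness of $\cD$, I reduce to $Y=\Lambda^{\bQ^1}\V$, $W=\Lambda^{\bQ^2}\V$ and $Z=\gamma\Lambda^{\bQ^3}\V$ with $\bQ^i\in\cQ$. Then I construct $\wt\bQ\in\cQ$ by two nested optional pastings: first paste $\bQ^3$ with $\bQ^1$ at the stopping time $\tau_1=t\one{F}+T\one{F^c}$, using a one-step density on $F$ calibrated so that $\Lambda^{\wt\bQ^{(1)}}|_F\propto\alpha\Lambda^{\bQ^1}$ and $\wt\bQ^{(1)}=\bQ^3$ on $F^c$; then paste $\wt\bQ^{(1)}$ with $\bQ^2$ at $\tau_2=t\one{F^c}+T\one{F}$, analogously calibrated on $F^c$. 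The hypothesis $\condE{X}{\cF_t}=\condE{Z}{\cF_t}$ is precisely what delivers the $\V$-preservation condition $\bE_{\wt\bQ^{(i)}}[\V|\cF_{\tau_i}]=\bE_{\bQ^3}[\V|\cF_{\tau_i}]$ at each stage, so that (i) applies and produces $\wt\bQ=\wt\bQ^{(2)}\in\cQ$ with $\Lambda^{\wt\bQ}\V\propto X$; hence $X\in\cD$.

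The principal obstacle is the second direction: reconciling the additive split-by-$F$ form of $X$ with the multiplicative one-step form of a single optional pasting. Since no single pasting can simultaneously implement a split along an $\cF_t$-event and two independent $\cF_{t+1}$-multipliers, chaining is forced, and the delicate verification is that the vector hypothesis $\condE{X}{\cF_t}=\condE{Z}{\cF_t}$ simultaneously supplies both the $\V$-direction (for the $\V$-preservation constraint) and the scalar normalisation $\condE{R_i}{\cF_{\tau_i}}=1$ at each pasting step; allowing $Z$ to vary within $\cD$ (rather than requiring an extremal form) provides the extra freedom needed to close this matching step.
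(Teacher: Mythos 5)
Your direction (i)$\Rightarrow$(ii) is essentially the paper's argument: you apply optional $\V$-m-stability twice, at the stopping times $t\one{F}+T\one{F^c}$ and $t\one{F^c}+T\one{F}$, calibrating the one-step density on $F$ (respectively $F^c$) from \eqref{eq:conestabcond} so that the $\V$-preservation constraint \eqref{eq:pmstabcond} holds. Your preliminary ``reduction to extremal form'' is unnecessary, though harmless: since $\cQ$ is assumed convex and $L^1$-closed (hence its generated cone is norm-closed and, being convex, also $\sigma(\cL^1,\cL^\infty)$-closed), $\cD=\cA(\V)^*=\cA^*\V$ is already exactly $\cone\{\Lambda^\bQ\V:\bQ\in\cQ\}$ without closure, so every $Y\in\cD$ is of the form $c\Lambda^\bQ\V$ with no limiting argument required.

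The genuine gap is in (ii)$\Rightarrow$(i). You assert that one may ``reduce to constant $\tau\equiv t$ by decomposing along atoms $\{\tau=s\}\in\cF_s$,'' and then only prove the constant-time case (with $F=\Omega$, $W=0$, which is a clean single application of (ii)). But the reduction to deterministic times is precisely the substantive step of this implication, and it does \emph{not} follow automatically from the constant-time case. An optional pasting of $\bQ^1$ and $\bQ^2$ at a genuine stopping time $\wt\tau$ blends, across the $\cF_s$-events $\{\wt\tau=s\}$, pastings occurring at different constant levels, and there is no a priori reason that a set closed under constant-time pastings should be closed under such blends: the recombination of the pieces is exactly the content of (ii) applied with a \emph{nontrivial} $F\in\cF_t$, which your reduction never invokes. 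The paper handles this by backward induction on the essential lower bound $k$ of the stopping time: setting $F=\{\wt\tau\geq k+1\}$ and $\tau^*=\wt\tau\one{F}+T\one{F^c}\geq k+1$, the induction hypothesis gives a pasting $\bQ^*\in\cQ$ at $\tau^*$, and then one invokes (ii) at $t=k$ with the nontrivial event $F$ to splice the constant-time-$k$ pasting on $F^c$ onto $\bQ^*$ on $F$. Your proposal contains the base case and the one-step ingredient, but omits the inductive splicing step that actually bridges stopping times and deterministic times; as written it therefore does not establish (ii)$\Rightarrow$(i).
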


The proof can be found in \Cref{sec:lempf}. 

\begin{defn}We shall say that an arbitrary  cone $\tilde\cD\subseteq \cL^1_T$ satisfying condition (ii) of \Cref{prop:eqstab} is \emph{optionally m-stable}.
\end{defn}

\begin{lem}\label{dualv} Suppose that $\mathbf{V}$ is a collection of $d+1$ num\'eraires, and $\cD $ is a convex cone in $L^\infty$. Then \[ \cD (\mathbf{V})^*=\cD ^*\mathbf{V}.\]
 \end{lem}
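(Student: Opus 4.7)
The plan is to prove $\cD(\V)^* = \cD^*\V$ by double inclusion. The direction $\cD^*\V \subseteq \cD(\V)^*$ is immediate from the definitions: if $Z \in \cD^*$ and $Y \in \cD(\V)$, then $Y\cdot\V \in \cD$, so the $\cL^1$--$\cL^\infty$ pairing gives
\[ \langle Z\V, Y\rangle = \E[Z(Y\cdot\V)] \le 0, \]
i.e., $Z\V \in \cD(\V)^*$.

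The reverse inclusion $\cD(\V)^* \subseteq \cD^*\V$ is the substantive one, and it reduces to showing that any $W \in \cD(\V)^*$ is pointwise parallel to $\V$. The key observation is that because $\cD$ is a convex cone it contains $0$, so for any indices $i,j \in \{0,\dots,d\}$ and any event $A\in\cF$, the portfolio $Y$ with $Y^i = v^j\one{A}$, $Y^j = -v^i\one{A}$, and all other coordinates zero lies in $\cL^\infty$ and satisfies $Y\cdot\V = 0 \in \cD$, whence $\pm Y \in \cD(\V)$. Pairing $W$ against both $\pm Y$ forces $\E[(W^iv^j - W^jv^i)\one{A}]=0$ for every $A$, and hence $W^iv^j = W^jv^i$ a.s. Thus $W$ is componentwise proportional to $\V$.

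This proportionality lets one define $Z := (W\cdot\V)/|\V|^2$. Boundedness of $\V$ together with the standing hypothesis that $|\V|$ is bounded away from zero ensures $Z \in L^1$, and the parallelism gives $Z\V = W$. It remains to check $Z \in \cD^*$: for any $X \in \cD$, set $Y := (X/|\V|^2)\V$, which belongs to $\cL^\infty$ (again using $\inf |\V|>0$ and $\V$ bounded), and satisfies $Y\cdot\V = X$, so $Y \in \cD(\V)$. Then
\[ \E[ZX] = \E[Z(Y\cdot\V)] = \langle Z\V, Y\rangle = \langle W, Y\rangle \le 0, \]
so $Z \in \cD^*$ and $W = Z\V \in \cD^*\V$.

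The main obstacle is the extraction of the scalar density $Z$ satisfying $Z\V = W$ in the reverse inclusion; everything else is a direct computation with the pairing. The assumptions on $\V$ (bounded, and Euclidean norm bounded away from zero) are used exactly once, to guarantee integrability of $Z$ and the admissibility of the auxiliary portfolios $(X/|\V|^2)\V$.
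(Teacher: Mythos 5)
Your proof is correct and takes essentially the same route as the paper's: both directions are handled identically, with the reverse inclusion resting on pairing $W$ against the zero-valued exchange portfolios $\pm\alpha(v^i e_j - v^j e_i)$ to conclude $W$ is componentwise parallel to $\V$. The only (minor) difference is how you extract the scalar — the paper takes $i=0$ and divides by $v^0$, while you project via $Z=(W\cdot\V)/|\V|^2$; your version is slightly cleaner and, unlike the paper's, explicitly uses the hypothesis $\inf|\V|>0$ to verify $Z\in L^1$ and the admissibility of the auxiliary portfolios.
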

The proof can be found in \Cref{sec:lempf}.

\subsection{An Equivalence Theorem}

Our first result is a set of conditions equivalent  to optional $\V$-time-consistency, including a precise statement of $\V$-representability, and a dual characterisation which pertains to the convex set of probability measures $\cQ$ that define the risk measure. 

This result resembles that obtained in \cite{JBA} for  \emph{predictable} versions of these concepts. 

To show the equivalence of $\V$-m-stability and $\V$-representability, we find  the dual of each $\K_t$, which we  call the \emph{optional pre-image} of $\cA(\mathbf{V})^*$ at time $t$. Aside from its  utility in proving the equivalence of $\mathbf{V}$-optional representability and optional $\mathbf{V}$-m-stability, the optional pre-image of an optionally m-stable convex cone $\cA(\mathbf{V})^*$ at time $t$ is a concrete description of the dual of the set of portfolios  held at time $t$ in order to maintain an acceptable position at time $t$.

We fix   the vector of num\'eraires $\mathbf{V}$,  a coherent pricing measure  $\rho=(\rho_t)_t$ with a closed, convex representing set of probability measures $\cQ$, and take $\ul{\cA}_t$ to be the acceptance set of $\rho_t$ for $t \in \bT$. One of our two main results is

\begin{theorem}\label{thm:main}
The following are equivalent:
\begin{enumerate}[(i)]
\item
$(\ul\rho_t)_{t\in \bT}$ is optionally $\mathbf{V}$-time-consistent;
\item
$\ul{\cA}$ is optionally represented by  $\mathbf{V}$;
\item
 $\cQ $ is optionally $\V$-m-stable.
\end{enumerate}
\end{theorem}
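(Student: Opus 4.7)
The plan is to prove the equivalence via the cycle (i) $\Leftrightarrow$ (ii) $\Leftrightarrow$ (iii), handling the first link by reconciling almost-sure and weak$^*$ modes of approximation and the second link by duality together with \Cref{prop:eqstab}.

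For (i) $\Leftrightarrow$ (ii), first observe that each $K_t \subseteq \cA(\V)$, since $\cA_t \subseteq \cA_0 = \cA$, and $\cA(\V)$ is itself a weak$^*$-closed convex cone (weak$^*$-continuity of $Y\mapsto Y\cdot\V$ together with weak$^*$-closedness of $\cA$). Hence $\overline{\oplus_t K_t}^{w^*} \subseteq \cA(\V)$ automatically, and only the reverse inclusion is at stake. For (ii) $\Rightarrow$ (i), given $X \in \cA$, lift to some $Y$ with $Y\cdot\V = X$ and invoke representability to obtain a net in $\oplus_t K_t$ weak$^*$-converging to $Y$; Mazur's lemma passes to a sequence of convex combinations, and a Komlós-type extraction upgrades weak$^*$ convergence of the corresponding claims to a.s.\ convergence. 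For (i) $\Rightarrow$ (ii), given $Y\in\cA(\V)$ set $X := Y\cdot\V$, apply (i) to get $X^n = \sum_t \pi^n_t\cdot\V \to X$ a.s.\ with $\pi^n_t \in K_t$, and adjust the first component via $\tilde\pi^n_0 := \pi^n_0 + Y - \sum_s \pi^n_s$ so that $\sum_t \tilde\pi^n_t = Y$ exactly. Since $\tilde\pi^n_0\cdot\V - \pi^n_0\cdot\V = X - X^n \to 0$ a.s.\ and $\pi^n_0\cdot\V \in \cA$, a weak$^*$-limit argument places $(\tilde\pi^n_t)$ inside $\overline{\oplus_t K_t}^{w^*}$ and exhibits $Y$ as a weak$^*$-limit of such portfolios.

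For (ii) $\Leftrightarrow$ (iii), apply the bipolar theorem in the pairing $(\cL^\infty, \cL^1)$ to rewrite (ii) as $\cA(\V)^* = \bigcap_t K_t^*$. By \Cref{dualv} we have $\cA(\V)^* = \cA^*\cdot\V = \cone(\cQ)\cdot\V$, while \Cref{prop:eqstab} identifies (iii) with optional m-stability of the dual cone $\cD := \cA(\V)^*$. Hence the task becomes: $\cD$ is optionally m-stable if and only if $\cD = \bigcap_t K_t^*$. To compute $K_t^*$, use the characterisation (\ref{ex}) together with the $L^\infty_+(\cF_t)$-homogeneity of $K_t$ to obtain that $W \in K_t^*$ iff $\condE{W}{\cF_t}$ lies in the closed convex cone generated by $\{\condE{\Lambda^\bQ \V}{\cF_t} : \bQ \in \cQ\}$ — the ``optional pre-image'' of $\cD$ at time $t$. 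Consequently $\bigcap_t K_t^*$ is the cone of $W\in\cL^1$ whose $\cF_t$-conditional expectations at every $t$ coincide with some element of $\cD$, i.e.\ the closure of $\cD$ under arbitrary one-step optional pastings.

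The main obstacle is this last identification of $\bigcap_t K_t^*$ with the optional m-stable hull of $\cone(\cQ)\cdot\V$. The easier direction — that $\bigcap_t K_t^*$ is itself optionally m-stable — follows because the pasting recipe in \Cref{def:optionalPastings} preserves each $\cF_t$-conditional constraint exactly when (\ref{eq:pmstabcond}) holds. The harder direction — that every $W \in \bigcap_t K_t^*$ is of the form $\Lambda^\bQ\V$ for some $\bQ\in\cone(\cQ)$ — I would prove by backwards induction on $t$, pasting representing densities at successive time steps using (\ref{eq:pmstabcond}) to preserve $\V$-compatibility, and closing in $\sigma(\cL^1,\cL^\infty)$ to absorb limits. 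The subtlety beyond Delbaen's classical scalar m-stability is exactly the multi-currency constraint (\ref{eq:pmstabcond}), which is what forces the ``extra freedom over one time period'' in \Cref{def:optionalPastings} and explains why optional $\V$-m-stability is strictly weaker than ordinary m-stability.
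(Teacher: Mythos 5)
Your (ii) $\Leftrightarrow$ (iii) argument is essentially the paper's: you invoke \Cref{dualv}, pass to polars, identify $K_t^*$ with the optional pre-image $\cM_t(\cA(\V)^*)$ (the content of \Cref{thm:crucialClaim}), and then reduce stability of $\cA(\V)^*$ to the fixed-point identity $\cA(\V)^* = \bigcap_t K_t^*$ via \Cref{lem:A1} and \Cref{lem:A2}. That part is fine and matches the paper.

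The (i) $\Leftrightarrow$ (ii) direction, however, has two real gaps, and they are precisely the ones the paper's very different route is designed to avoid.

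\emph{(ii) $\Rightarrow$ (i).} You propose Mazur's lemma followed by a Komlós extraction to upgrade a weak$^*$-convergent net from $\oplus_t K_t$ to an almost-sure limit. But Mazur's lemma applies to the \emph{weak} topology $\sigma(L^\infty,(L^\infty)^*)$, not the \emph{weak$^*$} topology $\sigma(L^\infty,L^1)$: for a convex set, the weak closure equals the norm closure, but the weak$^*$ closure is in general strictly larger than either, so you cannot conclude the existence of a norm- (or a.s.-) convergent sequence from $\oplus_t K_t$ to a point of $\ol{\oplus_t K_t}^{w^*}$. Komlós likewise gives nothing here without first having a sequence (not a net) with the right boundedness. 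The paper sidesteps this entirely by descending to $L^0$: \Cref{thm:FTAP} shows that $\oplus_t K^0_t$ is \emph{closed} in $\cL^0$ (via the vector-space property of null strategies and the Kabanov--Rasonyi--Stricker closure lemma), so no passage to a limit is even needed once one knows $\cA^0(\V)=\oplus_t K^0_t$; approximating each $\pi_t\in K^0_t$ by elements of $K_t$ then produces the required a.s.-convergent sequence $X^n\to X$ directly.

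\emph{(i) $\Rightarrow$ (ii).} Your ``adjust the first component'' device sets $\tilde\pi^n_0 := \pi^n_0 + Y - \sum_s\pi^n_s$ and claims a weak$^*$-limit argument places $(\tilde\pi^n_t)$ inside $\ol{\oplus_t K_t}^{w^*}$. But $\tilde\pi^n_0 \notin K_0$ in general (you have only $\rho_0(\pi^n_0\cdot\V)\le 0$, and adding $X - X^n$ need not preserve this), so $(\tilde\pi^n_t)_t$ is not in $\oplus_t K_t$ for any $n$, and the decomposition $\sum_t\tilde\pi^n_t = Y$ is the constant $Y$ — there is no limiting to be done. Worse, the correction $Y - \sum_s\pi^n_s$ converges to $0$ only in the sense that its image under $\cdot\V$ does; as a portfolio it could wander through $\ker(\cdot\V)$, so you have no $L^\infty$-boundedness to feed into a weak$^*$ dominated-convergence step. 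The paper instead introduces condition (iv), $\cA\subseteq\oplus_t K^0_t\cdot\V$, and proves (iv) $\Rightarrow$ (ii) by a backward induction on the dual cones $G_t = (\oplus_{s\ge t}K_s\cdot\V)^*$ and $B_t = L^\infty\cap\oplus_{s\ge t}K^0_s\cdot\V$, truncating portfolios on the events $\{\|\alpha_u\|\le n\}$ to move between the $L^\infty$ and $L^0$ worlds while controlling the duality pairing throughout. That induction is the genuinely non-trivial step that your sketch omits.

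In short: keep your (ii) $\Leftrightarrow$ (iii) argument, but for (i) $\Leftrightarrow$ (ii) you need the $L^0$ closure theory from \Cref{thm:FTAP} (or something equally strong) — the Mazur/Komlós route does not close.
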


\begin{example}\label{bigex}[A generic example]
Given  a positive $X\in \cL^1_T$, and a sequence of random, closed, convex sets $\cI :=(I_t)_{t=0,\ldots,T}$ in $\R^{d+1}$, each  measurable with respect to $\cE(\R^{d+1},\cF_t)$, the relevant Effros $\sigma$-algebra (see Remark 4.2 of \cite{JBW}), let
$$
\cQ^X_{\cI}:=\{\bQ\sim \bP:\: \E_{\bQ}[X|\cF_t]\in I_t\text{ for each }t\},
$$
then $\cQ^X_{\cI}$ is optionally $X$-m-stable. Note that $X$ need not be in $\cL^\infty$.

To recover the case of EMMs, simply take $X$ to be $M_T$, the terminal value of a positive $\bP$-martingale and $I_t$ to be the singleton $\{M_t\}$. Of course, $M_T$ is not necessarily in $\cL^\infty$, but we may rectify this by taking $\V=(v_0,\ldots,v_d)$, where
$v_i=\frac{M_T^i}{\sum_j M_T^j}$, and letting $\cQ$ be defined as the set  $\cQ:=\{ \bQ:\;\Lambda^{\bQ}=\frac{\sum_j M_T^j}{\sum_j \E_{\tilde\bQ} M_T^j}\Lambda^{\tilde \bQ} \text{ for some }\tilde\bQ\in \cQ^{M_T}_{\cI}\}$.
\end{example}

\label{sec:pomr}
We give the proof of \Cref{thm:main} in two steps.  First, we will show equivalence of optional $\V$-representability and optional $\V$-m-stability.
The proof of the equivalence of optional $\V$-time-consistency and optional representability is given after we have proved \Cref{thm:FTAP} -- a version of the Fundamental Theorem of Arbitrage Pricing.

\begin{defn}
For $\cD \subset \ul{\cL}^1_+$, we define, for each time $t$, the \emph{optional pre-image} of $\cD$ by
\begin{align}\label{eq:MtD }
\cM_t(\cD ) := \{ Z \in \ul{\cL}^1:  &\exists \alpha_t \in \ul{L}^0_{t,+},  \exists Z'\in \cD  \nonumber
\\&\text{ such that } \alpha_t Z' \in \ul{\cL}^1(\R^{d+1}) \text{ and } \condE{Z}{\ul{\cF}_{t}}= \alpha_t \condE{Z'}{\ul{\cF}_{t}} \}.
\end{align}
\end{defn}

The optional pre-image of  a set $\cD \subset \ul{\cL}^1_+$ is key in understanding optionally stable convex cones, as shown in the following two lemmas:

\begin{lem}\label{lem:A1} Suppose $\cD  \subset \ul{\cL}^1_+$. If $\cD $ is an optionally stable convex cone, then
\[ \cD = \bigcap_{t=0}^{T} \cM_t(\cD ).\]
\end{lem}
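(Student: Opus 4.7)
The easy inclusion $\cD \subseteq \bigcap_{t=0}^{T} \cM_t(\cD)$ is immediate: for $Z \in \cD$, the witnesses $Z' = Z$ and $\alpha_t = 1$ certify $Z \in \cM_t(\cD)$ for every $t$.

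For the reverse inclusion my plan is a step-by-step pasting argument built on the optional m-stability condition (\Cref{prop:eqstab}(ii)). Fix $Z \in \bigcap_{t=0}^{T} \cM_t(\cD)$ and, for each $t$, select witnesses $\alpha_t \in L^0_{t,+}$ and $Z^{(t)} \in \cD$ with $\alpha_t Z^{(t)} \in \cL^1(\R^{d+1})$ and $\condE{Z}{\cF_t} = \alpha_t \condE{Z^{(t)}}{\cF_t}$. Set $\tilde Z_t := \alpha_t Z^{(t)}$; since $\tilde Z_T = Z$, it suffices to prove by forward induction on $t$ that $\tilde Z_t \in \cD$.

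The base case $t=0$ reduces to the cone property: under the standard assumption that $\cF_0$ is $\P$-trivial, $\alpha_0$ is a non-negative constant and $\tilde Z_0 = \alpha_0 Z^{(0)} \in \cD$. For the inductive step, assume $\tilde Z_t \in \cD$, and apply optional m-stability at time $t$ with $Y = W = Z^{(t+1)}$, $F = \Omega$, and $\alpha = \beta = \alpha_{t+1} \in L^0_+(\cF_{t+1})$. The pasted element is precisely $\tilde Z_{t+1} = \alpha_{t+1} Z^{(t+1)}$, which lies in $\cL^1$ by choice of witness, and the required comparator in $\cD$ is $\tilde Z_t$: indeed, the tower property and the defining relation $\condE{Z}{\cF_{t+1}} = \alpha_{t+1}\condE{Z^{(t+1)}}{\cF_{t+1}}$ yield
\[
\condE{\tilde Z_{t+1}}{\cF_t} = \condE{\alpha_{t+1}\condE{Z^{(t+1)}}{\cF_{t+1}}}{\cF_t} = \condE{\condE{Z}{\cF_{t+1}}}{\cF_t} = \condE{Z}{\cF_t} = \condE{\tilde Z_t}{\cF_t},
\]
where the final equality is the inductive hypothesis. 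Optional m-stability then places $\tilde Z_{t+1} \in \cD$, and iteration gives $Z = \tilde Z_T \in \cD$.

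The only real obstacle is the bookkeeping: at each step of the induction one must verify that every hypothesis of the m-stability condition is in force, namely integrability of $\alpha_{t+1}Z^{(t+1)}$ (supplied by the definition of $\cM_{t+1}(\cD)$), $\cF_{t+1}$-measurability of the scalar (automatic from $\alpha_{t+1} \in L^0_{t+1,+} \subseteq L^0_+(\cF_{t+1})$), and the availability of a comparator in $\cD$ with matching $\cF_t$-conditional expectation (supplied, crucially, by the inductively constructed $\tilde Z_t$ rather than by any of the original witnesses $Z^{(s)}$). Because this comparator is generated by the induction itself, no separation, duality, or closure assumption on $\cD$ is needed.
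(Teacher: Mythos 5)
Your proof is correct and takes a genuinely different route from the paper's. The paper runs a \emph{backward} induction: it defines a chain $\xi^T = Z^{T}$, $\xi^t = \one{F_t}\kappa_t\xi^{t+1} + \one{F_t^c}Z^t$ (with $F_t = \{\beta_t>0\}$, $\kappa_t = \beta_{t+1}/\beta_t$), proves $\xi^t \in \cD$ and $\condE{\xi^t}{\cF_t} = \condE{Z^t}{\cF_t}$ by descending on $t$, and then recovers $Z = \beta_0\xi^0$. That construction needs the indicator split on $\{\beta_t = 0\}$ and relies on $\cD\subset\cL^1_+$ to conclude $Z=0$ wherever $\condE{Z}{\cF_t}=0$. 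Your \emph{forward} induction replaces all of that with a trivial pasting at each step ($F=\Omega$, $Y=W$), so there is no ratio $\kappa_t$, no zero-set bookkeeping, and the $\cD$-comparator required by the stability condition is supplied automatically by the previously constructed element $\tilde Z_t$. This is cleaner and, in particular, the fact that $\cD \subset \cL^1_+$ plays no explicit role in your argument; you only pay at the base case, where you invoke triviality of $\cF_0$ so that $\alpha_0$ is scalar -- but the paper's proof makes the same implicit use of that assumption when it concludes $Z = \beta_0\xi^0\in\cD$ from $\xi^0\in\cD$. One small misattribution: in your display, the final equality $\condE{Z}{\cF_t}=\condE{\tilde Z_t}{\cF_t}$ is not the inductive hypothesis; it is immediate from $\tilde Z_t = \alpha_t Z^{(t)}$, the $\cF_t$-measurability of $\alpha_t$, and the defining relation $\condE{Z}{\cF_t}=\alpha_t\condE{Z^{(t)}}{\cF_t}$. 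The inductive hypothesis is the \emph{membership} $\tilde Z_t\in\cD$, which you correctly use to furnish the comparator.
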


If $S\subset \cL^1$, we denote by  the  $ \ol{\conv}(S)$  the closure in $\cL^1$ of the convex hull of $S$.
\begin{lem}\label{lem:A2} Suppose $\cD  \subset \ul{\cL}^1_+$, and define
\[ [\cD ] := \bigcap_{t=0}^{T}\left( \ol{\conv} \cM_t(\cD )\right), \] 
(where $ \cM_t(\cD )$ is as defined in \eqref{eq:MtD }).
Then
\begin{enumerate}[(a)]
\item
$[\cD ]$ is the smallest stable closed convex cone in $\ul{\cL}^1$ containing $\cD $;
\item
$\cD =[\cD ]$ if and only if $\cD $ is a stable closed convex cone in $\ul{\cL}^1$.
\end{enumerate}
\end{lem}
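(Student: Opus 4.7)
My plan is to prove (a) directly; (b) will follow immediately, since if $\cD$ is a stable closed convex cone then minimality in (a) forces $[\cD]\subseteq\cD$ (with $\cD\subseteq[\cD]$ obvious), while the converse direction of (b) just reads off the properties of $[\cD]$ established in (a). For (a) I verify four items: (1) $\cD\subseteq[\cD]$, (2) $[\cD]$ is a closed convex cone, (3) $[\cD]$ is optionally m-stable, and (4) $[\cD]$ is minimal. Items (1) and (2) are easy: for (1), any $Z\in\cD$ lies in $\cM_t(\cD)$ (take $\alpha_t=1,\,Z'=Z$) and hence in $\overline{\conv}\,\cM_t(\cD)$ for every $t$; for (2), $[\cD]$ is an intersection of closed convex sets, and each $\cM_t(\cD)$ is already a cone under strictly positive scaling ($\E[\lambda Z|\cF_t] = (\lambda\alpha_t)\E[Z'|\cF_t]$), a property preserved by closed convex hull and intersection.

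Step (3) is the main work. I will rely on a \emph{cylinder characterisation}: since membership in $\cM_s(\cD)$ constrains only the $\cF_s$-conditional expectation, $\cM_s(\cD)=\pi_s^{-1}(A_s(\cD))$, where $\pi_s=\E[\,\cdot\,|\cF_s]$ and $A_s(\cD):=\{\alpha\,\E[Z'|\cF_s]: \alpha\in L^0_{s,+},\,Z'\in\cD,\,\alpha Z'\in\cL^1\}$. A centre-shift approximation (replacing an approximating sequence $a_n\in\conv A_s(\cD)$ with $a_n\to\E[Z|\cF_s]$ by $Z_n:=Z-\E[Z|\cF_s]+a_n\in\conv\cM_s(\cD)$) then upgrades this to $\overline{\conv}\,\cM_s(\cD)=\pi_s^{-1}(\overline{\conv}\,A_s(\cD))$. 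Given stability data $Y,W,Z\in[\cD]$, $F\in\cF_t$, $\alpha,\beta\in\cL^0_+(\cF_{t+1})$ with $\alpha Y,\beta W\in\cL^1$, and $X=\one F\alpha Y+\one{F^c}\beta W$ satisfying $\E[X|\cF_t]=\E[Z|\cF_t]$, I verify $X\in\overline{\conv}\,\cM_s(\cD)$ for each $s$. When $s\le t$, the tower property gives $\E[X|\cF_s]=\E[Z|\cF_s]\in\overline{\conv}\,A_s(\cD)$ (since $Z\in[\cD]$), closing this case via the cylinder characterisation. When $s\ge t+1$, $F,\alpha,\beta$ are $\cF_s$-measurable, but $\alpha,\beta$ need not be bounded; the plan is to truncate $\alpha^k:=\alpha\wedge k,\,\beta^k:=\beta\wedge k$, approximate $Y_n\to Y,\,W_n\to W$ in $\cL^1$ by elements of $\conv\cM_s(\cD)$, and use that $\cM_s(\cD)$ is stable under multiplication by bounded $\cF_s$-measurable nonnegative scalars (integrability of the witness $\alpha_i Z'_i$ is preserved). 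A convex combination with equal weights of $2\one F\alpha^k Y_n$ and $2\one{F^c}\beta^k W_n$ then places $\one F\alpha^k Y_n+\one{F^c}\beta^k W_n$ inside $\conv\cM_s(\cD)$; sending $n\to\infty$ (bounded scalings preserve $\cL^1$-convergence) and then $k\to\infty$ (dominated convergence, using $\alpha Y,\beta W\in\cL^1$) yields $X\in\overline{\conv}\,\cM_s(\cD)$.

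For the minimality step (4), I take any stable closed convex cone $\cE\supseteq\cD$ and apply \Cref{lem:A1} to obtain $\cE=\bigcap_t\cM_t(\cE)$. The inclusion $\cD\subseteq\cE$ gives $A_t(\cD)\subseteq A_t(\cE)$ and hence $\cM_t(\cD)\subseteq\cM_t(\cE)$; to upgrade this to $\overline{\conv}\,\cM_t(\cD)\subseteq\cM_t(\cE)$ I need $\cM_t(\cE)$ itself to be closed and convex. Convexity of $A_t(\cE)$ should follow from applying $\cE$'s stability on $\cF_t$-partitions together with its cone structure (re-expressing convex combinations of $\alpha_i\E[Z_i|\cF_t]$ as pastings of elements of $\cE$ with $\cF_{t+1}$-measurable weights), and closedness of $\cM_t(\cE)=\pi_t^{-1}(A_t(\cE))$ reduces to closedness of $A_t(\cE)$ in $\cL^1_t$ via a diagonal argument against the $\cL^1$-closedness of $\cE$. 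Granted this, $[\cD]=\bigcap_t\overline{\conv}\,\cM_t(\cD)\subseteq\bigcap_t\cM_t(\cE)=\cE$, completing (a). The two main obstacles I anticipate are the truncation-plus-approximation argument in (3) and the verification that $\cM_t(\cE)$ is closed and convex in (4); both hinge on controlling the interaction between $\cF$-measurable scalings and $\cL^1$-limits.
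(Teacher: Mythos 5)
Your proposal follows essentially the same route as the paper's proof. Items (1) and (2), and the main stability argument in (3), match the paper closely: the paper also splits on $s\le t$ (using that membership in $\cM_s$ depends only on the $\cF_s$-conditional expectation and re-centering an approximating sequence) versus $s\ge t+1$ (truncating $\alpha,\beta$, approximating $Y$ and $W$ in $\cL^1$ by elements of $\conv\cM_s(\cD)$, invoking stability of $\conv\cM_s(\cD)$ under bounded $\cF_s$-measurable nonnegative scalings and under sums, then sending $n\to\infty$ followed by $K\to\infty$). Your ``convex combination with equal weights of $2\one F\alpha^kY_n$ and $2\one{F^c}\beta^kW_n$'' is just a repackaging of the paper's two ``elementary properties.''

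Where you diverge is step (4), and this is worth flagging. The paper proves minimality in one line: monotonicity ($\cD\subset\cD'\Rightarrow[\cD]\subset[\cD']$) plus the assertion ``$\cD'=[\cD']$ by Lemma \ref{lem:A1}.'' But Lemma \ref{lem:A1} only yields $\cD'=\cap_t\cM_t(\cD')$, whereas $[\cD']=\cap_t\ol{\conv}\cM_t(\cD')$; the two coincide only if $\cM_t(\cD')$ is already closed and convex for each $t$, which is precisely the point you correctly single out as the crux of the argument (equivalently, that $\ol{\conv}\cM_t(\cD)\subseteq\cM_t(\cE)$ for a stable closed convex $\cE\supseteq\cD$). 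So you have correctly identified a step that the paper compresses. However, you leave this step as an acknowledged ``obstacle'' rather than closing it: your sketch of convexity of $A_t(\cE)$ via pastings is not obviously sufficient, since condition (ii) of Proposition \ref{prop:eqstab} produces partition-type mixtures $\one F\alpha Y+\one{F^c}\beta W$ rather than genuine $\cF_t$-convex combinations $\lambda Z_1+(1-\lambda)Z_2$ with $\lambda\in L^\infty_t$, and your ``diagonal argument'' for closedness of $A_t(\cE)$ is not spelled out. To actually close the gap you would need either an additional structural hypothesis (e.g.\ that $\cE$ is a module over $L^\infty_+(\cF_t)$, which does hold in the paper's intended application $\cE=\cA(\V)^*$) or a different argument; as written, your proof of (a) remains incomplete at the same place where the paper's proof is terse. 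Your handling of (b) from (a) is correct and matches the paper.
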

We prove both these lemmas in \Cref{sec:lempf}.  The proof of equivalence of statements (ii) and (iii) of \Cref{thm:main} is dependent on  the following
\begin{theorem}\label{thm:crucialClaim}
For any $t\in \{0,1,\dots,T-1\}$, 
\begin{equation}\label{eq:MandK}
\K_t=(\cM_t(\ul\cA(\mathbf{V})^*))^*.
\end{equation}
\end{theorem}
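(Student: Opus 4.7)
The plan is to establish both inclusions using the $\cL^\infty/\cL^1$ duality pairing $\langle Y,Z\rangle=\E[Y\cdot Z]$; throughout I abbreviate $\cM_t:=\cM_t(\cA(\V)^*)$. For the inclusion $\K_t\subseteq \cM_t^*$, I fix $Y\in\K_t$ and $Z\in\cM_t$ and pick a representation $\E[Z|\cF_t]=\alpha_t\,\E[Z'|\cF_t]$ with $\alpha_t\in L^0_{t,+}$ and $Z'\in\cA(\V)^*$; by \Cref{dualv} I may write $Z'=\Lambda\V$ for some $\Lambda\in\cA^*$. The characterisation \eqref{ex} gives $\alpha Y\cdot\V\in\cA$ for every $\alpha\in L^\infty_+(\cF_t)$, so
\[
0\ge \E[\Lambda\,\alpha Y\cdot \V]=\E[\alpha\,Y\cdot \E[Z'|\cF_t]]\qquad\text{for every }\alpha\in L^\infty_+(\cF_t),
\]
forcing $Y\cdot\E[Z'|\cF_t]\le 0$ almost surely. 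Conditioning then yields
\[
\E[Y\cdot Z]=\E[Y\cdot\E[Z|\cF_t]]=\E[\alpha_t\,Y\cdot\E[Z'|\cF_t]]\le 0,
\]
with integrability guaranteed by $Y\in\cL^\infty$ and $\alpha_t\E[Z'|\cF_t]=\E[Z|\cF_t]\in\cL^1$.

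For the reverse inclusion $\cM_t^*\subseteq\K_t$, I fix $Y\in\cM_t^*$ and show separately that $Y$ is $\cF_t$-measurable and that $Y\cdot\V\in\cA$. The key observation is that \emph{every} $W\in\cL^1(\R^{d+1})$ with $\E[W|\cF_t]=0$ lies in $\cM_t$: one takes $\alpha_t=0$ and any $Z'\in\cA(\V)^*$ in the defining representation. Since this condition is preserved under negation, $\pm W\in\cM_t$ and so $\E[Y\cdot W]=0$. Applying this with $W=Y-\E[Y|\cF_t]\in\cL^\infty\subseteq\cL^1$, and using $\E[\E[Y|\cF_t]\cdot W]=\E[\E[Y|\cF_t]\cdot\E[W|\cF_t]]=0$,
\[
0=\E[Y\cdot W]=\E[(Y-\E[Y|\cF_t])\cdot(Y-\E[Y|\cF_t])]=\E[|Y-\E[Y|\cF_t]|^2],
\]
forcing $Y=\E[Y|\cF_t]$ almost surely. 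For the acceptability part, note that $\cA(\V)^*\subseteq\cM_t$ (take $\alpha_t\equiv 1$ and $Z=Z'$), so polar reversal gives $\cM_t^*\subseteq(\cA(\V)^*)^*$. Since the map $Y\mapsto Y\cdot\V$ is weak$^*$-continuous (as $\V\in\cL^\infty$), $\cA(\V)$ is a weak$^*$-closed convex cone in $\cL^\infty$, and the bipolar theorem yields $(\cA(\V)^*)^*=\cA(\V)$. Combining both facts, $Y\in\cA(\V)\cap\cL^\infty(\cF_t)=\K_t$.

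I expect the $\cF_t$-measurability step to be the main conceptual hurdle: one must recognise that admitting $\alpha_t\in L^0_{t,+}$, and in particular $\alpha_t=0$, inflates $\cM_t$ beyond $\cA(\V)^*$ so that it captures every integrable centred random variable---providing exactly the separating family which forces elements of $\cM_t^*$ to be $\cF_t$-measurable. The remaining verifications---integrability issues arising from the unboundedness of $\alpha_t$ (handled by packaging $\alpha_t\E[Z'|\cF_t]$ as $\E[Z|\cF_t]\in\cL^1$) and the weak$^*$-closure of $\cA(\V)$---are then essentially routine.
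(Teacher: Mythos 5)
There is a genuine gap in the final step of the proof: the asserted equality $\cA(\V)\cap\cL^\infty(\cF_t)=\K_t$ is false in general. Unwinding the definitions, $\K_t=\cA_t(\V)\cap\cL^\infty(\cF_t)$ (equation~\eqref{ktdef}), and by~\eqref{ex} this is
\[
\K_t=\{Y\in\cL^\infty(\cF_t,\R^{d+1}): \alpha Y\in\cA(\V)\text{ for \emph{every} }\alpha\in L^\infty_+(\cF_t)\},
\]
whereas $\cA(\V)\cap\cL^\infty(\cF_t)$ only asks that $Y$ itself lie in $\cA(\V)$ (the case $\alpha\equiv 1$). Since $\cA=\cA_0$ is closed under multiplication by nonnegative bounded $\cF_0$-measurable scalars but not, in general, $\cF_t$-measurable ones, membership of $Y\cdot\V$ in $\cA$ does not imply $\rho_t(Y\cdot\V)\le 0$, so $\K_t\subsetneq\cA(\V)\cap\cL^\infty(\cF_t)$ can well be strict. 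Your argument establishes $Y\in\cA(\V)$ and $Y$ $\cF_t$-measurable, but that is not enough to put $Y$ in $\K_t$.

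The missing ingredient, which the paper's own proof supplies, is that $\cM_t$ is stable under multiplication by $L^\infty_+(\cF_t)$ (if $Z|_t=\alpha_t Z'|_t$ then $(gZ)|_t=(g\alpha_t)Z'|_t$ with $g\alpha_t\in L^0_{t,+}$), and therefore so is its polar $\cM_t^*$: for $Y\in\cM_t^*$, $g\in L^\infty_+(\cF_t)$ and $Z\in\cM_t$, $\E[(gY)\cdot Z]=\E[Y\cdot(gZ)]\le 0$. Combined with your $\cM_t^*\subseteq\cA(\V)$ this gives $gY\in\cA(\V)$ for every such $g$, which together with your measurability step yields $Y\in\K_t$ via~\eqref{ex}. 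Everything else in your argument is sound and close in spirit to the paper's (the forward inclusion is essentially identical up to a reordering of the truncation step, and your centred-variable trick $W=Y-\E[Y|\cF_t]$ is a pleasant shortcut for the measurability claim), but the last sentence as written does not close the argument.
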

The proof is given in \Cref{sec:lempf}.

Thus we characterise each \lq\lq summand'' in the representation (cf. \Cref{defn:predrep}) as the dual of the optional pre-image of the dual of the set of acceptable portfolios in $\mathbf{V}$.

\begin{proof}[Proof of \Cref{thm:main}, equivalence of (ii) and (iii)] 
By assumption, $\cA(\V)$ is a weak$^*$-closed convex cone in $\ul\cL^\infty(\R^{d+1})$ which is arbitrage-free, so that $\cA(\V)^{**} = \cA(\V)$. 
Recall that $\cA(\V)$ is optionally representable if 
\[\cA(\V) = \ol{\oplus_{t=0}^{T} \K_t}.\]
 Thanks to Proposition \ref{prop:eqstab}, we must show the equivalence of the two conditions
\begin{enumerate}[(i')]
\setcounter{enumi}{1}
\item
$\cA(\V)$ is optionally representable; and
\item
$\cA(\V)^*$ is optionally $\V-m$-stable.
\end{enumerate}

\emph{(ii') $\Rightarrow$ (iii'):} Assuming $\cA(\V)$ is optionally representable, it follows from \Cref{thm:crucialClaim} that
\[ \cA(\V) = \ol{\oplus_t \K_t} =  \ol{\oplus_t \cM_t(\cA(\V)^*)^*}.\]
Taking the dual, we find that
\[   \cA(\V)^* = \cap_t \cM_t(\cA(\V)^*)^{**} = \cap_t \ol{\conv}\cM_t(\cA(\V)^*)  \]
where the second equality follows from the Bipolar Theorem. Hence, $ \cA(\V)^* = [\cA(\V)^*]$, and so by \Cref{lem:A2}, $\cA(\V)^*$ is optionally stable.

\emph{(iii') $\Rightarrow$ (ii'):} Assuming $\cA(\V)$ is a weak$^*$-closed convex  cone, note that $\cA(\V)^*$ is a convex cone closed in $(\ul\cL^1, \sigma(\ul\cL^1, \ul\cL^\infty))$. Assuming further that $\cA(\V)^*$ is \emph{stable},
\begin{align*}
\cA(\V)^* &= \cap_t \cM_t(\cA(\V)^*) &\text{by \Cref{lem:A1}}
\\&=  \cap_t \K_t^* &\text{by \cref{eq:MandK}}.
\end{align*}
Now we may apply the Bipolar Theorem to deduce
\[ \cA(\V)\equiv \cA(\V)^{**} = \ol{\oplus_{t} \K_t}\]
and $\cA(\V)$ is optionally representable, as required.

\end{proof}

\section{The Fundamental Theorem of Multi-currency Reserving  }\label {fund}

As announced in the introduction, we now discuss closure properties in $\cL^0$ of the decomposition of a $\V$-optionally representable acceptance set $\cA$.

By analogy to the definition in \cite{schachermayer2004fundamental}, we define a trading cone as follows:
\begin{defn}
$\cC\subset \cL^0(\R^{d+1},\cF_t)$ is said to be a {\em (time-$t$) trading cone} if
$\cC$ is closed in $\cL^0$ and is closed under multiplication by non-negative, bounded, $\cF_t$-measurable random variables.
\end{defn}
We recall Lemma 4.6 of \cite{JBW} which we quote here (suitably rephrased) for ease of reference:
\begin{theorem}
\label{abscone}  Let $\cC$
be a closed convex cone in $\cL^0(\cF)$, then
\begin{equation}\label{stable}
\cC\hbox{ is stable under multiplication by (scalar) elements
of }L^\infty_+(\cF)
\end{equation}
if and only if there is a random closed cone $M^{\cC}$ such that
\begin{equation}\label{stable2}
\cC= \{X\in \cL^0(\cF):\; X\in M^{\cC}\text{ a.s.}\}.
\end{equation}
\end{theorem}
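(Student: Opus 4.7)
The easy direction is immediate: if $\cC = \{X \in \cL^0(\cF) : X \in M^{\cC} \text{ a.s.}\}$ for a random closed cone $M^{\cC}$, then for $X \in \cC$ and $\alpha \in L^\infty_+(\cF)$ we have $\alpha(\omega) X(\omega) \in \alpha(\omega) M^{\cC}(\omega) \subseteq M^{\cC}(\omega)$ a.s., so $\alpha X \in \cC$.

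For the converse, my plan is to construct $M^{\cC}$ from a countable dense subset of $\cC$. Assuming the probability space is standard (so that $\cL^0$ is separable), I would pick a countable $\cL^0$-dense sequence $\{X_n\}_{n \in \bN} \subseteq \cC$ and enumerate as $\{f_m\}_{m \in \bN}$ all finite non-negative rational linear combinations of the $X_n$; each $f_m$ lies in $\cC$ by the convex cone property. I would then define
$$M^{\cC}(\omega) := \overline{\{f_m(\omega) : m \in \bN\}} \subseteq \R^{d+1}.$$
Since $\{f_m\}$ is closed under non-negative rational scalar multiplication and addition, $M^{\cC}(\omega)$ is a closed convex cone, and by Castaing's representation it is a measurable random closed set with the $f_m$ as a countable dense family of $\cF$-measurable selectors. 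The inclusion $\cC \subseteq \{X \in \cL^0(\cF) : X \in M^{\cC}\text{ a.s.}\}$ then follows by picking a subsequence $X_{n_k} \to X$ a.s.\ for $X \in \cC$, giving $X(\omega) \in M^{\cC}(\omega)$ a.s.

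For the reverse inclusion, given $X$ with $X(\omega) \in M^{\cC}(\omega)$ a.s., I would observe that the multifunction $\omega \mapsto \{m \in \bN : |f_m(\omega) - X(\omega)| < 1/k\}$ has $\cF$-measurable, a.s.\ nonempty values, so an elementary measurable selection (since $\bN$ is discrete) yields $\cF$-measurable $m_k : \Omega \to \bN$ with $|f_{m_k(\omega)}(\omega) - X(\omega)| < 1/k$ a.s. Then I would set
$$Z_k := \sum_{m \in \bN} \one{\{m_k = m\}} f_m ,$$
and argue that each finite truncation $\sum_{m \leq N} \one{\{m_k = m\}} f_m$ lies in $\cC$ because each summand is in $\cC$ by scalar stability (the indicator $\one{\{m_k = m\}} \in L^\infty_+(\cF)$) and $\cC$ is closed under addition; these truncations converge to $Z_k$ in $\cL^0$, so $Z_k \in \cC$ by $\cL^0$-closedness, and $Z_k \to X$ in $\cL^0$ by construction, giving $X \in \cC$ by one more application of closedness.

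The main obstacle is this stitching step producing $Z_k$: each ingredient (pointwise approximation, measurability of the selector, closure properties) is standard in isolation, but it is the combination of $\cF$-measurable scalar stability with the closed convex cone structure of $\cC$ that upgrades a measurable pointwise selector from $M^{\cC}$ into an actual element of $\cC$.
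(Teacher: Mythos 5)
The paper does not prove this result: it is imported (with notation adapted) from Lemma~4.6 of \cite{JBW}, so there is no in-paper argument to compare against. Judged on its own terms, your proof is correct and follows the standard Hiai--Umegaki/Castaing representation route for closed decomposable sets. The construction of $M^{\cC}$ as the pointwise closure of a countable family $\{f_m\}$ closed under non-negative rational combinations yields a measurable random closed (indeed convex) cone; the forward inclusion $\cC\subseteq\{X: X\in M^{\cC}\text{ a.s.}\}$ follows from a.s.\ convergence along a subsequence; and the reverse inclusion is correctly handled by the discrete measurable selector $m_k$ and the truncations $\sum_{m\le N}\one{\{m_k=m\}}f_m$, which is precisely where the convex cone property (so finite sums stay in $\cC$), $L^\infty_+$-stability (so indicator-weighted pieces stay in $\cC$) and $\cL^0$-closedness (to pass to the limits in $N$ and then $k$) all interact, as you flag.

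Two caveats worth recording. First, include the zero combination among the $\{f_m\}$ so that $M^{\cC}(\omega)$ is a.s.\ nonempty, as Castaing's criterion requires. Second, your separability assumption on $\cL^0$ (equivalently, a countable $\cL^0$-dense subset of $\cC$) is a genuine extra hypothesis not stated in the theorem; it holds whenever $\cF$ is countably generated mod null sets, which is a near-universal assumption in this literature, but a fully general proof can avoid it by building $M^{\cC}(\omega)$ directly from the measurable random distance functions $d_q := \essinf_{X\in\cC}|X-q|$ for $q\in\bQ^{d+1}$ and setting $M^{\cC}(\omega)=\{x\in\R^{d+1}: \inf_{q\in\bQ^{d+1}}(|x-q|+d_q(\omega))=0\}$, which requires only separability of the target space $\R^{d+1}$. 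These are refinements rather than errors; the architecture of your argument is sound.
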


We shall demonstrate that if $\cA$ is $\V$-representable then $(\K^0_t)_{0\leq t\leq T}$, the   $L^0$-closures of the cones $\K_t$, are trading cones, whose sum is closed, and equal to the $L^0$-closure of $\cA(\V)$, which is is arbitrage-free. 

This is a version of the (First) Fundamental Theorem of Asset Pricing (FTAP).

For the rest of this section closures in $L^0$ or $\cL^0$ will be denoted by a simple overline, whereas weak${}^*$ closure of a set $S$ will be denoted $\ol{S}^w$
We set $\cA^0(\V) := \ol{\cA(\V)}$, the closure of $\cA(\V)$ in $\cL^0$. Recall that $\cA^0(\V)$ is arbitrage-free whenever
\[ \cA^0(\V) \cap \cL^0_{+}=\{0\},\]
and define the trading cone 
\[ \cC_t =\{X \in \cL^0_t: cX \in \cA^0(\V) \text{ for all }c \in L^\infty_+(\cF_t)\}. \]
Note that closure in $\cL^0$ of $\cC_t$ follows immediately from the closure of $\cA^0(\V)$.
\begin{lemma}\label{k}
For each $t$, $\K^0_t$ is a trading cone and if $X\in \cL^0_t$ then $X\in K^0_t$ iff $X.\E_{\bQ}[V|\cF_t]\leq 0 \text{ for all }\bQ\in\cQ.$
\end{lemma}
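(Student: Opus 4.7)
The plan is to reduce both assertions to the Fatou dual representation
\[
\rho_t(Y) = \esssup_{\bQ \in \cQ} \E_\bQ[Y | \cF_t], \qquad Y \in L^\infty,
\]
together with the identity $\E_\bQ[X\cdot\V | \cF_t] = X\cdot\E_\bQ[\V | \cF_t]$, valid whenever $X$ is $\cF_t$-measurable and $\V$ is bounded (hence $\bQ$-integrable for every $\bQ \in \cQ$).

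For the trading cone property, closure of $\K^0_t$ in $\cL^0$ is automatic by construction. For stability under multiplication by any $c \in L^\infty_+(\cF_t)$, I would first verify it on $\K_t$ via conditional positive homogeneity of $\rho_t$: if $X \in \K_t$, then $\rho_t(cX\cdot\V) = c\rho_t(X\cdot\V) \le 0$, so $cX \in \K_t$. Since multiplication by the fixed bounded $c$ is continuous in probability, the property then transfers to $\cL^0$-limits, so $\K^0_t$ inherits it.

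For the dual characterization, I would first handle $\K_t$ itself: a bounded $X \in \cL^\infty(\cF_t, \R^{d+1})$ lies in $\K_t$ iff $\rho_t(X \cdot \V) \le 0$, and the Fatou representation combined with the pulling-out identity gives
\[
\rho_t(X \cdot \V) = \esssup_{\bQ \in \cQ} X \cdot \E_\bQ[\V | \cF_t],
\]
which is $\le 0$ iff $X \cdot \E_\bQ[\V | \cF_t] \le 0$ for every $\bQ \in \cQ$. To lift this to $\K^0_t$: in the forward direction, if $X_n \to X$ in probability with $X_n \in \K_t$, pass to an a.s.-convergent subsequence; since $\E_\bQ[\V | \cF_t]$ is bounded, $X_n \cdot \E_\bQ[\V | \cF_t] \to X \cdot \E_\bQ[\V | \cF_t]$ a.s., so the inequality persists. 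In the backward direction, given $X \in \cL^0_t$ satisfying the dual inequality, the truncations $X_n := X \one{|X| \le n}$ are bounded and $\cF_t$-measurable, and because $\one{|X| \le n} \ge 0$ is $\cF_t$-measurable, they also satisfy $X_n \cdot \E_\bQ[\V | \cF_t] \le 0$; hence $X_n \in \K_t$ by the bounded characterization, and $X_n \to X$ almost surely.

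I do not anticipate any serious obstacle. The main item requiring care is tracking $\bQ$- versus $\bP$-almost-sure statements, since the representing measures in $\cQ$ need only be absolutely continuous with respect to $\bP$; this is standard bookkeeping, resolved by fixing $\bP$-a.s. versions of the conditional expectations via the Radon--Nikodym density.
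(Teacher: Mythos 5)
Your argument is correct and follows essentially the same route as the paper: both reduce $\K_t$ membership for bounded $X$ to the dual Fatou inequality $X\cdot\E_\bQ[\V|\cF_t]\le 0$, then identify $\K^0_t$ with the set of $\cL^0_t$-elements satisfying that inequality (you spell out the two-directions argument via truncation and a.s.-subsequences, which the paper compresses into ``it follows that''). The trading-cone property you derive via conditional positive homogeneity and continuity of multiplication, where the paper simply reads it off the dual description; these are equivalent observations.
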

\begin{proof}
Now if $X\in \cL^\infty_t$ then $X\in \K_t$ if and only if $\E_{\bQ}[X.V|\cF_t]=X.\E_{\bQ}[V|\cF_t]\leq 0 \text{ for all }\bQ\in\cQ.$ It follows that
 $\K^0_t=\{X\in  \cL^0_t:\; X.\E_{\bQ}[V|\cF_t]\leq 0 \text{ for all }\bQ\in\cQ\}$ and this is obviously a trading cone.
\end{proof}

It follows from Theorem \ref{abscone} and Lemma \ref{k} that
\begin{lemma}
There are random closed cones $M^\cC_t$ and $M^K_t$ such that 
\[ \cC_t =  \{Y \in \cL^0_t: Y \in M^\cC_t \text{ a.s.}\}\]
\[ K_t^0 =  \{Y \in \cL^0_t: Y \in M^K_t \text{ a.s.} \}\]
and the polar (in $\R^{d+1}$) of $M^K_t$ is
$\ol{cone}(\{\E_{\bQ}[\V|\cF_t]:\; \bQ\in\cQ\})$, the random closed cone generated by $\{\E_{\bQ}[\V|\cF_t]:\; \bQ\in\cQ\}$.
\end{lemma}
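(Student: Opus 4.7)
The plan is to apply Theorem~\ref{abscone} twice to produce the random closed cones $M^{\cC}_t$ and $M^K_t$, then to identify $M^K_t$ explicitly and dualise pointwise in $\R^{d+1}$.

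First, I would verify the hypotheses of Theorem~\ref{abscone} for $\cC_t$. Convexity is immediate from the convexity of $\cA^0(\V)$. Closure in $\cL^0$ follows because if $X_n \to X$ in $\cL^0$ and $c \in L^\infty_+(\cF_t)$, then $cX_n \to cX$ in $\cL^0$ (multiplication by a bounded random variable is $\cL^0$-continuous) and $\cA^0(\V)$ is closed by construction, so $cX \in \cA^0(\V)$; thus $X \in \cC_t$. Stability under multiplication by elements of $L^\infty_+(\cF_t)$ is built into the definition of $\cC_t$ (a product of two non-negative bounded $\cF_t$-measurable random variables is still non-negative bounded $\cF_t$-measurable). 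So Theorem~\ref{abscone} delivers $M^{\cC}_t$.

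For $\K^0_t$, the cleanest route is to use the characterisation from Lemma~\ref{k}:
\[
\K^0_t = \{X \in \cL^0_t : X \cdot \E_{\bQ}[\V|\cF_t] \leq 0 \text{ for all } \bQ \in \cQ \}.
\]
Each defining inequality is preserved under $\cL^0$-convergence (passing to an a.s.-convergent subsequence) and under multiplication by $c \in L^\infty_+(\cF_t)$, so $\K^0_t$ is a closed convex cone stable under $L^\infty_+(\cF_t)$-multiplication. Theorem~\ref{abscone} then yields $M^K_t$. To identify it explicitly, set
\[
\widetilde M(\omega) := \{y \in \R^{d+1} : y \cdot \E_{\bQ}[\V|\cF_t](\omega) \leq 0 \text{ for all } \bQ \in \cQ\}.
\]
By Lemma~\ref{k}, $Y \in \K^0_t$ iff $Y(\omega) \in \widetilde M(\omega)$ a.s., so uniqueness (up to a nullset) of the random closed cone in Theorem~\ref{abscone} forces $M^K_t = \widetilde M$ a.s.

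Finally, for the polar computation, I would work pointwise in $\R^{d+1}$ on a full-measure set. The set $\widetilde M(\omega)$ is an intersection of closed linear half-spaces $H_{\bQ}(\omega) := \{y : y \cdot \E_{\bQ}[\V|\cF_t](\omega) \leq 0\}$ whose defining normals are $\E_{\bQ}[\V|\cF_t](\omega)$. The finite-dimensional Bipolar Theorem says that the polar of such an intersection is precisely $\overline{\cone}\{\E_{\bQ}[\V|\cF_t](\omega) : \bQ \in \cQ\}$, which is the desired statement.

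The main subtlety is measurability: the uncountable family $\{\E_{\bQ}[\V|\cF_t] : \bQ \in \cQ\}$ must yield an $\cF_t$-measurable random closed cone. This is handled by noting that $\bQ \mapsto \E_{\bQ}[\V|\cF_t]$ is $L^1$-continuous in the density and $\cQ$ (regarded as a subset of $L^1$) is separable, so the intersection can be realised over a countable dense subfamily without changing either $\widetilde M$ or the closed cone generated by the normals. With that reduction the measurability of $\widetilde M$ and of the polar is routine via Castaing representations, and the identification above goes through cleanly.
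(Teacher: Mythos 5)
Your proof is correct and takes the same route the paper intends: the lemma is presented there as an immediate consequence of Theorem~\ref{abscone} and Lemma~\ref{k}, and your proposal simply fills in the routine verifications (convexity, $\cL^0$-closure, $L^\infty_+(\cF_t)$-stability) and the $\omega$-wise polar computation. One small inaccuracy worth flagging in your measurability remark: the map $\bQ\mapsto\E_\bQ[\V|\cF_t]$ is \emph{not} $L^1$-continuous in the density $\Lambda^\bQ$, since $\E_\bQ[\V|\cF_t]=\E[\Lambda^\bQ\V|\cF_t]/\Lambda^\bQ_t$ involves a quotient whose denominator can vanish; the reduction to a countable $L^1$-dense subfamily of $\cQ$ should instead be run with the unnormalised kernels $\E[\Lambda^\bQ\V|\cF_t]$, for which $L^1$-continuity follows at once from boundedness of $\V$, and these differ from the normals $\E_\bQ[\V|\cF_t]$ only by the non-negative $\cF_t$-measurable scalars $\Lambda^\bQ_t$, so they cut out the same random cone $\widetilde M$ and generate the same random polar, leaving your argument intact.
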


We now give the main theorem of this section:
\begin{theorem}\label{thm:FTAP}
The set $\cG:=\oplus_t \cC_t$ is closed in $L^0$, arbitrage-free and equals $\cH:=\oplus_t K^0_t(\cA,\V) $. Moreover, if $\cA$ is $\V$-representable, then their common value is $\cA^0(\V)$ and then $\cA^0$, the closure in $L^0$ of $\cA$ is given by
\begin{equation}\label{closure2}
\cA=\cA^0(\V).\V=\oplus_t K^0_t(\cA,\V).\V.
\end{equation}
\end{theorem}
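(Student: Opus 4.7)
The plan is to establish the theorem in three phases, all under the implicit $\V$-representability hypothesis flagged in this section's preamble. First I would pin down the inclusion chain $\cH \subseteq \cG \subseteq \cA^0(\V)$: by the characterisation (\ref{ex}), for $X \in K_t$ and any $c \in L^\infty_+(\cF_t)$ the defining property of $K_t$ gives $cX \in \cA(\V) \subseteq \cA^0(\V)$, so $K_t \subseteq \cC_t$; the $L^0$-closedness of $\cC_t$ upgrades this to $K^0_t \subseteq \cC_t$, whence $\cH \subseteq \cG$. Testing with $c=1$ in the definition of $\cC_t$ gives $\cC_t \subseteq \cA^0(\V)$, so $\cG \subseteq \cA^0(\V)$ since $\cA^0(\V)$ is a convex cone.

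Next I would identify $\cC_t = K^0_t$ (forcing $\cG = \cH$) and prove arbitrage-freeness. For the identification, I argue contrapositively: if $X \in \cC_t$ but $\bP(A)>0$ for $A := \{X \cdot \E_\bQ[\V|\cF_t] > 0\}$ and some $\bQ \in \cQ$, testing with $c = \ind{A \cap \{|X|\le n\}}$ yields a bounded $cX \in \cA^0(\V) \cap \cL^\infty$; a downward-truncation argument applied to an approximating sequence in $\cA(\V)$ (exploiting that $\cA$ is downward-closed and satisfies the Fatou property) shows $cX \in \cA(\V)$, yielding $\E_\bQ[cX \cdot \V] \le 0$; but $\E_\bQ[cX \cdot \V] = \E_\bQ[\ind{A \cap \{|X|\le n\}} X \cdot \E_\bQ[\V|\cF_t]] > 0$ for large $n$, a contradiction. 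Hence $\cC_t = K^0_t$ by Lemma \ref{k}, so $\cG = \cH$. For arbitrage-freeness, take $X = \sum_t X_t \in \cH \cap \cL^0_+$; applying the tower property under each $\bQ \in \cQ$, with integrability handled by monotone truncation of the non-negative $X \cdot \V$, yields $\E_\bQ[X \cdot \V] \le 0$. Combined with $X \cdot \V \ge 0$ and the arbitrage-freeness of $\cA$ applied to the bounded truncations $(X \cdot \V)\wedge n \in \cA \cap \cL^\infty_+$, this forces $X \cdot \V = 0$ and hence $X = 0$, as $\V > 0$ componentwise.

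In the third phase I would obtain $L^0$-closedness of $\cH$ and, under representability, identify the common set with $\cA^0(\V)$. The family $(K^0_t)$ consists of trading cones with dual generators $\{\E_\bQ[\V|\cF_t] : \bQ \in \cQ\}$; the arbitrage-freeness just established puts us in the setting of the FTAP for markets with proportional transaction costs, from which (via, e.g., Schachermayer \cite{schachermayer2004fundamental} or Jacka--Berkaoui--Warren \cite{JBW}) one concludes $L^0$-closedness of $\cH$. Under $\V$-representability, $\cA(\V) = \ol{\oplus_t K_t}^{w^*}$; since $\oplus_t K_t \subseteq \cH$ and $\cH \cap \cL^\infty$ is weak$^*$-closed (by Krein--Smulian applied to the $L^0$-closed convex $\cH$, using that $L^0$-convergence of $\cL^\infty$-norm-bounded convex sets coincides there with weak$^*$-convergence), we get $\cA(\V) \subseteq \cH$; passing to $L^0$-closures gives $\cA^0(\V) \subseteq \cH$, so $\cG = \cH = \cA^0(\V)$. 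The final identity $\cA^0 = \cA^0(\V) \cdot \V = (\oplus_t K^0_t) \cdot \V$ follows by continuity of $Y \mapsto Y \cdot \V$ in $L^0$ (since $\V$ is bounded) together with the baseline $\cA = \cA(\V)\cdot \V$.

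The main obstacle is the reverse inclusion $\cC_t \subseteq K^0_t$: extracting the hard dual inequality $\E_\bQ[cX \cdot \V] \le 0$ from bare membership in $\cA^0(\V)\cap \cL^\infty$ requires careful Fatou/truncation, because an approximating sequence in $\cA(\V)$ need not be uniformly $\cL^\infty$-bounded; the downward-closure of $\cA$ (truncating from above preserves acceptability) plus the Fatou property is precisely what makes that step go through. A secondary obstacle is invoking the transaction-cost FTAP closedness result in the third phase, which typically demands a strictly consistent price system; here such a system is furnished by any $\bQ \in \cQ$ equivalent to $\bP$, whose existence (in the closed convex hull of $\cQ$) is guaranteed by the arbitrage-freeness of $\cA$ via a Halmos--Savage selection.
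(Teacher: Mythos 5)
The proposal takes a genuinely different route from the paper's proof, and it has a gap that is more than a technicality.

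\textbf{The central difficulty.} In your phase 2 you reduce $\cC_t \subseteq K^0_t$ to the implication ``$cX \in \cA^0(\V)\cap\cL^\infty \Rightarrow cX \in \cA(\V)$'', and you propose to extract this from downward-closure of $\cA$ plus the Fatou property. This does not work. Truncating an approximating sequence $Y^n\cdot\V\in\cA$ from above does keep you in $\cA$ and gives a sequence bounded \emph{above}, but the Fatou property (as defined in the paper) requires a uniformly $L^\infty$-bounded sequence, and the truncated sequence is not bounded below. Applying a one-sided Fatou lemma to the truncated $W^n := (Y^n\cdot\V)\wedge M$ yields $\E_\bQ[cX\cdot\V] \ge \limsup_n \E_\bQ[W^n]$, which is the \emph{wrong} direction. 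In fact the asserted identity $\cA^0(\V)\cap\cL^\infty = \cA(\V)$ is simply false without the $\V$-representability hypothesis: take $T=1$, $\V=1$, $\cQ=\{\bP\}$, so $\cA=\{X\in L^\infty:\E[X]\le 0\}$; then $1 - n\ind{B_n}\in\cA$ with $\bP(B_n)=1/n$ converges to $1$ in $L^0$, so $1\in\cA^0\cap L^\infty\setminus\cA$, and in fact $\cA^0=L^0$. The paper does prove $\cA^0(\V)\cap\cL^\infty=\cA(\V)$ under representability --- but only \emph{as a corollary of this very theorem}, so invoking it here would be circular. This is precisely why the paper argues differently: it never tries to pass from an $L^0$-limit back into the $L^\infty$ duality with $\cQ$.

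\textbf{What the paper does instead.} The paper first proves $\cG$ is $L^0$-closed via the Kabanov--R\'asonyi--Stricker null-strategy lemma (\Cref{s}), by showing directly that $\cN(\cC_0\times\cdots\times\cC_T)$ is a vector space: given a null strategy $\xi$, one writes $-c\xi_t = b\xi_0+\cdots+(b-c)\xi_t+\cdots+b\xi_T$ and reads off membership of each term in the relevant cone. It then establishes $\cC_t=K^0_t$ by comparing the random polar cones $(M^\cC_t)^*$ and $(M^K_t)^*$ via consistent price processes, not via truncation into $\cA$. Your phase 3 also substitutes an appeal to the transaction-cost FTAP for $L^0$-closure of $\cH$; but for general, non-polyhedral trading cones of the sort generated by an arbitrary $\cQ$, closedness of the sum is not a free consequence of (strict) no-arbitrage --- it is exactly the null-strategy vector-space property that does the work, and this is the lemma the paper uses. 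If you want to prove closedness of $\cH$ rather than $\cG$ directly, you would still need a null-strategy argument (and would face an integrability issue in the resulting martingale computation).

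\textbf{In short.} Phase 1 is fine and matches the paper. Phase 2's reduction to the inclusion $\cA^0(\V)\cap\cL^\infty\subseteq\cA(\V)$ is exactly the obstacle you flag, and the proposed Fatou/truncation fix does not overcome it; without a different argument the step fails (and without representability the inclusion is false). Phase 3's appeal to closedness via FTAP is too quick in this generality. You would need to replace the truncation step in phase 2 by the paper's polar-cone comparison (or some other argument that stays inside $L^0$ duality and never needs to re-enter $\cA(\V)$), and to replace the FTAP appeal in phase 3 by a Kabanov-type null-strategy argument.
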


\begin{proof}
The proof is in three steps. We will show that:
\begin{itemize}
\item[1.] $\cG$ is closed in $\cL^0$. 
\item[2.]$C_t=K^0_t(\cA,\V)$ (and $\cG$ is arbitrage-free) establishing equality of $\cG$ and $\cH$. 
\item[3.]$\cA^0(\V)=\cH$ if $\cA$ is $\V$-representable and $\cA^0=\cA^0(\V).\V$.
\end{itemize}
{\em Proof of 1.} We recall Definition 2.6 and Lemma 2.7 (suitably rephrased) from \cite{JBW}
\begin{defn}\label{null}
Suppose $\cJ$ is a sum of convex cones in $\cL^0$:
$$
\cJ=\M_0+\ldots +\M_T.$$ 
We call elements of $\M_0\times\ldots\times
\M_T$ whose components almost surely sum to 0, null-strategies (with respect to
the decomposition $\M_0+\ldots +\M_T$) and denote the set of them by
$\cN(\M_0\times\ldots\times \M_T)$. 
\end{defn}
For convenience we denote
$\cC_0\times\ldots\times \cC_T$ by $\cprod$.
\begin{lemma}{\em (Lemma 2 in
Kabanov et al \cite{kabanov3})}
\label{s} Suppose that
$$\cJ=\M_0+\ldots +\M_T$$
is a decomposition of $\cJ$ into trading cones; then $\cJ$ is closed if
$\cN(\M_0\times\ldots\times \M_T)$ is a vector space and each
$\M_t$ is closed in $\cL^0$.
\end{lemma}
Since we have already established that each $\cC_t$ is a trading cone, applying Lemma \ref{s} to the decomposition of $\cG$, we only need to prove that the null strategies $\cN(\cprod)$ form a vector space. The argument is standard: since $\cG$ is a cone, we need only show that $\xi=(\xi_0,\ldots,\xi_T)\in\cN(\cprod)$ implies that $-\xi\in\cN(\cprod)$. To do this, given $\xi \in\cN(\cprod)$, fix a $t$ and a bounded non-negative $c\in L^0_t$ with a.s. bound $b$. Then, since $\xi$ is null,
$$
-c\xi_t =b\xi_0+\ldots b\xi_{t-1}+(b-c)\xi_t+b\xi_{t+1}+\ldots +b\xi_T,
$$
and each term in the sum is clearly in the relevant $\cC_s$ and hence in $\cA^0$. Since $c$ and $t$ are arbitrary, $-\xi_t\in\cC_t$ for each $t$ and so $-\xi\in\cN(\cprod)$.

It is clear from (\ref{ex}) that $\K_t\subseteq \cC_t$ and hence, by closure of $\cC_t$ that $\K^0_t\subseteq\cC_t$. Thus $\cH\subseteq \cG$.

{\em Proof of 2.} Recall from \cite{JBW} that consistent price processes for $\cH$ are those martingales valued in $(M^K_t)^*$ at each time step. Since $\Lambda^{\bQ}_t\V^{\bQ}_t$ is such a martingale (for any $\bQ\in \cQ$), the collection of consistent price processes for the sequence of trading cones $ K^0_t(\cA,\V) $ is non-empty and so, by Theorem 4.11 of \cite{JBW}, $\ol{\cH}$ is arbitrage-free.

 The consistent price processes for $\oplus_t \cC_t$ are those martingales valued in $(M^\cC_t)^*$ at each time step.
We now claim that, for each $t$, 
\[ (M^\cC_t)^* = (M^K_t)^*.\]
Once we establish this, equality follows on taking the random polar cones in $\R^{d+1}$.

First, observe that $\cC_t \supseteq K^0_t(\cA,\V) $ implies $(M^\cC_t)^* \subseteq (M^K_t)^*$ almost surely. So, assume that $(M^\cC_t)^* $ is a strict subset of $ (M^K_t)^*$. Then there exists $\bQ \in \cQ$ such that 
\[ \P(\bE_\bQ[\V|\cF_t] \not \in M^\cC_t ) >0. \]
For this $\bQ$, we form the consistent price process $Z_t = \Lambda^{\bQ}_t  \bE_{\bQ}[\V |\cF_t] \in (M^K_t)^*$. Form the frictionless trading cones 
\[ C_t(Z) := \{ X \in L^0_t: X \cdot Z_t \le 0 \}\]
and we have an arbitrage-free and closed cone $\wt \cA = \oplus_t C_t(Z)$ from the FTAP. Clearly $\wt \cA$ contains $\cA^0(\V)$, and so $C_t(Z)$ is contained in $\cC_t$, whence $Z_t \in M^\cC_t$ a.s., contradicting the assumption of strict inclusion.

{\em Proof of 3.} If $\cA$ is $\V$-representable then 
$$
\cA^0(\V)=\ol{\ol{(\oplus_t \K_t)}^w}=\ol{\oplus_t\K^0_t}=\ol{\cH}
$$
 but, as we have already established, $\cH$ is closed. Finally, since $\cA\supset K_t.\V$ it is clear that $\cA^0\supseteq \cH.\V$. Conversely, since $\cA^0=\ol{\cA(\V).\V}=\ol{\ol{\oplus K_t}^w.\V}=\ol{\oplus K_t.\V}$, it follows that $\cA^0\subseteq \cH.\V$
\end{proof}
\section{Completing the Proof of \Cref{thm:main}}

\begin{proof}[Proof of \Cref{thm:main}: the equivalence of (i) and (ii)]
We shall use the result from \Cref{thm:FTAP} that if optionally $\V$-representable then 
$$
\cA^0(\V)=\oplus_t K^0_t(\cA,\V)
$$
and
\begin{equation}\label{closure}
\cA^0=\oplus_t K^0_t(\cA,\V).\V,
\end{equation}
where the superscript 0 represents closure in $L^0$ or $\cL^0$.
Now define 

{\em condition (iv)}:
\begin{equation}\label{cond:iv}
\cA\subseteq \oplus_t K^0_t(\cA,\V).\V.
\end{equation}
Clearly (\ref{closure})$\Rightarrow$(\ref{cond:iv}) and hence (ii)$\Rightarrow$(iv).
It is  clear that
 (iv)$\Rightarrow$(i). Conversely, from \Cref{thm:FTAP}, $\oplus_tK^0_t$ is closed in $\cL^0$, whereas (i) tells us that $\cA\subset \ol{\oplus K_t.\V}=\ol{\oplus_tK^0_t}$ so we conclude that (i)$\Rightarrow$(iv). Now it is sufficient to prove that (iv)$\Rightarrow$(ii).

Suppose (iv) holds. We shall show that $\cA\subseteq \ol{\oplus_t K_t.\V}$ or, equivalently (since $\cA$ is a closed convex cone), that 
\begin{equation}\label{iv}
(\oplus_t K_t.\V)^*\subseteq \cA^*.
\end{equation}
Define
$$
G_t=(\oplus_{s=t}^T K_s.\V)^* \text{ and }B_t:=L^\infty \cap \oplus_{s=t}^T K^0_s.\V
$$
Now, since $(L^\infty\cap\bigl(\oplus_tK^0_t.\V\bigr))^*\subseteq \cA^*$ we may show (\ref{iv}) by proving, by induction that
\begin{equation}\label{induct}
G_t\subseteq B_t^*\text{ for each }t.
\end{equation}
Clearly (\ref{induct}) holds for $t=T$ since $K_T=\{X\in\cL^\infty:\; X.\V\leq 0\text{ a.s. }\}$ and $K_T^0=\{X\in\cL^0:\; X.\V \leq 0\text{ a.s.}\}$ so $B_T=L^\infty_{-}=K_T.\V$.

Now suppose that (\ref{induct}) holds for $t=u+1$. Take arbitrary $Z\in G_u$ and $X\in B_u$. Then $X=\alpha_u.\V +Y$, for some 
$Y\in  \oplus_{s=u+1}^T K^0_s.\V$ and $\alpha_u\in K^0_u$. For integer $n>0$, set $F_n=\{||\alpha_u||\leq n\}$, then 
$\alpha_u\1_{F_n}\in K_u$ and $Y\1_{F_n}=X\1_{F_n}-\alpha_u\1_{F_n}\in L^\infty$ (since $X\in L^\infty$). Since, for any $t\geq u$,  $K^0_t$ is 
closed under multiplication by $\1_{F_n}$, it follows that $Y\1_{F_n}\in B_{u+1}$ with $X\1_{F_n}=\alpha_u.\V 
\1_{F_n}+Y\1_{F_n}$ and hence $X\1_{F_n}\in B_u$. Since $Z\in G_{u+1}$ it follows from the induction hypothesis that
$\E ZX\1_{F_n}=\E Z(Y\1_{F_n}+\alpha_u.\V \1_{F_n})\leq \E Z(\alpha_u.\V \1_{F_n})$. Now $\alpha_u.\V\1_{F_n}\in K_u.\V$ and 
$Z\in (K_u.\V)^* $ so $\E[ZX\1_{F_n}]\leq 0$. Thus, by dominated convergence, $\E[ZX]\leq 0$ and since $X$ is an arbitrary element 
of $ B_u$ it follows that $Z\in B_u^*$ and so $G_u\subseteq B_u^*$, establishing the inductive step.
\end{proof}
We have actually proved something extra: 
\begin{cor}If $\cA$ is $\V$-representable then $B_0$ is weak* closed and equal to $\cA$.
\end{cor}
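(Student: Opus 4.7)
The plan is to harvest the corollary from the ingredients already developed in the preceding proof. The key observation is that the argument for (iv)$\Rightarrow$(ii) has essentially proved a two-sided inclusion, of which only one direction was needed to conclude $\cA \subseteq \ol{\oplus_t K_t \cdot \V}^w$.

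First, I would observe the easy inclusion $\cA \subseteq B_0$. Since $\V$-representability is equivalent to (i)--(iii) and also to (iv) by the proof above, condition (iv) gives $\cA \subseteq \oplus_t K^0_t \cdot \V$. Because $\cA \subseteq L^\infty$, intersecting both sides with $L^\infty$ yields $\cA \subseteq L^\infty \cap \oplus_t K^0_t \cdot \V = B_0$.

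For the reverse inclusion, I would invoke the inductive estimate $G_0 \subseteq B_0^*$ that was the substantive content of the proof above (with $t=0$). Since each individual summand $K_t \cdot \V$ lies in $\cA$ (by the very definition of $K_t$ via $\cA(\V)$), we have $\oplus_t K_t \cdot \V \subseteq \cA$, and hence $\cA^* \subseteq (\oplus_t K_t \cdot \V)^* = G_0 \subseteq B_0^*$. Taking polars once more and using that $\cA$ is a weak${}^*$-closed convex cone (so $\cA^{**}=\cA$), the bipolar theorem gives $B_0 \subseteq B_0^{**} \subseteq (\cA^*)^* = \cA$.

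Combining the two inclusions yields $B_0 = \cA$, and since $\cA$ is weak${}^*$-closed by our standing assumption (coherent $\rho$ with the Fatou property, via \Cref{prop:background}), $B_0$ inherits weak${}^*$ closure automatically. There is essentially no obstacle here: the inductive bound $G_0 \subseteq B_0^*$ was the hard part, and it was proved in the previous argument; the corollary is just the observation that this bound, combined with the trivial inclusion $\cA \subseteq B_0$ from (iv), pins $B_0$ down to $\cA$ exactly rather than merely sandwiching it inside the weak${}^*$ closure of $\oplus_t K_t \cdot \V$.
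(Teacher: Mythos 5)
Your proof is correct and follows essentially the same route as the paper's: the easy inclusion $\cA\subseteq B_0$ from condition (iv), then the inductive bound $G_0\subseteq B_0^*$ plus the bipolar theorem for the reverse. The one cosmetic difference is in how you close the loop in the reverse direction: you chain $\cA^*\subseteq G_0\subseteq B_0^*$ (using only the trivial inclusion $\oplus_t K_t\cdot\V\subseteq\cA$ and weak${}^*$-closedness of $\cA$) and take polars once, whereas the paper applies the bipolar theorem to $G_0\subseteq B_0^*$ to get $B_0^{**}\subseteq \ol{\oplus_t K_t\cdot\V}$ and then invokes $\V$-representability to identify the right-hand side with $\cA$; both routes rely on the same hard ingredient $G_0\subseteq B_0^*$, so the difference is presentational.
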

\begin{proof}
From \Cref{thm:FTAP}, If $\cA$ is $\V$-representable then $\cA^0=\oplus_{s=0}^TK^0_s.\V$, so
$$
B_0=L^\infty \cap  (\oplus_{s=0}^T K^0_s.\V)=L^\infty\cap \cA^0\supset \cA.
$$
Now since $\cA$ is $\V$-representable, $\cA=\ol{\oplus_t K_t.\V}\supset B_0^{**}$, from (\ref{induct}), but 
$ B_0^{**}=\ol{B_0}$ so 
$$
\cA=\ol{B_0}=B_0.
$$
\end{proof}
\begin{remark}
Of course, if $\cA$ is $\V$-m-representable then we could replace the almost sure convergence in the definition of optional time consistency by weak${}^*$-convergence, at the cost of taking a net rather than a sequence.

It follows from the corollary that if $\cA$ is $\V$-time-consistent then, given $X\in \cA$ we can form a sequence of reserves $R_t$ such that $R_T.\V=X$, and each increment $\pi_t:=R_t-R_{t-1}$ is in $K^0_t$ so that $\pi_t$ is the almost sure limit of elements of $K_t$.
\end{remark}

\section{Associating a pricing mechanism to a market with proportional transaction costs}\label{proptran}
Having made the connection in \Cref{fund} between optionally-representable CRM's and trading cones, in this section, we directly associate the reserving mechanism to a hedging strategy in a market with transaction costs. This is achieved by adding an extra time period  $(T,T+1]$ to the market with transaction costs, in which all positions are cashed out into a base num\'eraire   $v^0$. We do this by imposing num\'eraire risks that are so disadvantageous as to force a risk-averse agent to sell-up at time $T$, rather than in the additional period.

Let $e_0=(1,0,\dots,0),\dots,e_d=(0,\dots,0,1)$ denote the canonical basis of $\R^{d+1}$. 
Recall that in a market with transaction costs the basic set-up has a collection of assets (labelled $0,\dots,d$) and  random bid-ask prices $\pi^{i,j}_t$ at each trading time $t\in\{0,\dots,T\}$. Thus $\pi^{i,j}_t$ is the number of units of asset $i$ that can be exchanged for one unit of asset j at time $t$. The corresponding trading cone, which we denote by $\tcK^0_t(\pi_t)$ is generated by these trades together the possibility of consumption so that $\tcK_t(\pi_t)$ is the (closed) cone generated by non-negative  $\cF_t$-measurable multiples of the vectors $-e_i$ and $e_j - \pi^{ij}_t e_i$, for $i,j \in \{0,1,\dots,d\}$. 
The set of claims available from zero endowment is then
\[ \cB_T(\pi) = \bigoplus_{t=0}^{T} \tcK^0_t(\pi_t).\]
We (initially) assume that the closure of $\cB_T(\pi)$ in $\cL^0$ is arbitrage-free. Note that thanks to Theorem 1.2 of \cite{JBW} we may (and shall) then assume  that, by amending the bid-ask prices if necessary,
$\cB_t(\pi)$ is closed. The proof of this theorem also establishes 
that the null strategies for the resulting trading cones form a vector space.

We denote the $L^0$-closure of the set of acceptable claims under a risk measure generated by a collection of absolutely continuous probability measures,$\cQ$,   by $\cA_\cQ^0$. We will show that each market corresponds to a  CRM

\begin{theorem}\label{thm:tradingCone}
For the sequence of transaction cost matrices $(\pi_t^{ij})_{t=0,1,\dots,T}$,  , there is a stochastic basis $(\wt\Omega, \wt\cF, \wt \bF, \wt \P)$,  a vector of num\'eraires $\V\in L^\infty(\wt\Omega, \wt\cF, \wt \P; \R^{d+1})$, and a set of optionally $\V$-m-stable probability measures $\cQ$ such that the closure (in $\cL^0$) of the corresponding set of $\cF_T$-measurable attainable claims is  the collection of claims attainable by trading in the underlying assets:
\[
 \cB_T(\pi)= \cA_{\cQ}^0(\V) \cap L^0(\cF_T). \]
 \end{theorem}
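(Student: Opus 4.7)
The plan is to embed the transaction-cost market into the CRM framework by extending the time horizon by one step $(T,T+1]$, placing in the extra period the randomness needed to encode the space of consistent price processes of the original market. Concretely, take the enlarged probability space $\wt\Omega=\Omega\times E$ for an auxiliary Polish space $E$ equipped with a fully-supported reference measure $\mu$, set $\wt\P=\P\otimes\mu$, and extend the filtration so that $\wt\cF_t$ is the trivial lift of $\cF_t$ for $t\le T$ while $\wt\cF_{T+1}$ reveals the $E$-coordinate. The numeraire vector $\V\in L^\infty(\wt\cF_{T+1};\R^{d+1}_{++})$ is chosen to depend only on the $E$-coordinate, with image rich enough that any measurable selector of an $(\tcK_t(\pi_t))^*$-valued random vector arises as $\E_{\bQ}[\V\mid\wt\cF_t]$ for some $\bQ\ll\wt\P$. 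A natural candidate for $E$ is the unit simplex in $\R^{d+1}$ with $\V(\omega,\xi)=\xi+\e\1$ for small $\e>0$ (to ensure strict positivity) and $\mu$ uniform.

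Next I define
\[
\cQ:=\bigl\{\bQ\ll\wt\P:\;\E_{\bQ}[\V\mid\wt\cF_t]\in(\tcK_t(\pi_t))^*\;\wt\P\text{-a.s., for each }t=0,\dots,T\bigr\}.
\]
This fits the template of \Cref{bigex} (with $X=\V$ and $I_t=(\tcK_t(\pi_t))^*$, which are $\cE(\R^{d+1},\cF_t)$-measurable random closed convex cones because the $\pi_t$ are $\cF_t$-measurable), so $\cQ$ is automatically optionally $\V$-m-stable. Non-emptiness follows from the FTAP for markets with proportional transaction costs applied to the (assumed closed and arbitrage-free) cone $\cB_T(\pi)$, which delivers a consistent price process that can be realised as $\E_{\bQ}[\V\mid\wt\cF_\cdot]$ under a suitable $\bQ$.

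To conclude $\cA^0_\cQ(\V)\cap\cL^0(\cF_T;\R^{d+1})=\cB_T(\pi)$, I argue as follows. For each $t\le T$, the dual characterisation in \Cref{k} gives that $Y\in\K^0_t(\cA_\cQ,\V)$ iff $Y\cdot\E_{\bQ}[\V\mid\wt\cF_t]\le 0$ for all $\bQ\in\cQ$; by the richness of $\V$, the random closed cone generated by $\{\E_{\bQ}[\V\mid\wt\cF_t]:\bQ\in\cQ\}$ is exactly $(\tcK_t(\pi_t))^*$, whence $\K^0_t(\cA_\cQ,\V)=\tcK^0_t(\pi_t)$. At $t=T+1$, $\V$ is $\wt\cF_{T+1}$-measurable, so $\K^0_{T+1}(\cA_\cQ,\V)=\{Y\in\cL^0(\wt\cF_{T+1}):\,Y\cdot\V\le 0\}$, the frictionless liquidation cone. \Cref{thm:FTAP} then yields $\cA^0_\cQ(\V)=\bigoplus_{t=0}^{T+1}\K^0_t(\cA_\cQ,\V)$; intersecting with $\cL^0(\cF_T;\R^{d+1})$ eliminates the $t=T+1$ summand because the spread of $\V$ across $E$ forces any $\cF_T$-measurable $Y$ with $Y\cdot\V\le 0$ $\wt\P$-a.s. to satisfy $Y\le 0$, whereupon it is absorbed into $\tcK^0_0(\pi_0)$. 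We are left with $\bigoplus_{t=0}^{T}\tcK^0_t(\pi_t)=\cB_T(\pi)$.

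The main obstacle will be the richness claim: that the cone generated by $\{\E_{\bQ}[\V\mid\wt\cF_t]:\bQ\in\cQ\}$ is exactly $(\tcK_t(\pi_t))^*$ rather than a proper subcone. This requires a time-step-by-time-step measurable selection argument combined with the FTAP duality between transaction-cost trading cones and consistent price processes (as in \cite{JBW}), together with a careful verification that the ``disadvantageous'' design of $\V$ on $E$ genuinely forces the liquidation summand at $T+1$ to vanish upon restricting to $\cF_T$-measurable portfolios.
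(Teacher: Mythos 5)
Your overall strategy---augmenting the horizon by one period, choosing a terminal num\'eraire vector $\V$, defining $\cQ$ so that $\E_{\bQ}[\V\mid\wt\cF_\cdot]$ reproduces consistent price processes, and then quotienting away the extra period---is the same architecture the paper uses. The difficulty is that your concrete choice of $\V$ discards the information the construction must carry, and this breaks two of your key steps.

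First, the claim that any $\cF_T$-measurable $Y$ with $Y\cdot\V\le 0$ $\wt\P$-a.s.\ must satisfy $Y\le 0$ is false for $\V(\omega,\xi)=\xi+\eps\1$, $\xi$ in the simplex. Take $Y=(\delta,-1,0,\dots,0)$ with $0<\delta\le\eps/(1+\eps)$; then for every $\xi$ in the simplex, $Y\cdot(\xi+\eps\1)=\delta\xi_0-\xi_1+\eps(\delta-1)\le\delta(1+\eps)-\eps\le 0$, yet $Y_0>0$. So your $t=T+1$ summand cannot simply be absorbed into $\tcK^0_0(\pi_0)$. What one actually needs is the weaker but $\pi$-dependent inclusion $\{Y\in\cL^0(\cF_T):Y\cdot\V\le 0\text{ a.s.}\}\subseteq\tcK^0_T(\pi_T)$, and this forces $\V$ to encode $\pi_T$: the paper takes $\wt\Omega=\Omega\times\{0,1\}^d$ and sets $\tv^i(\omega,\omega')=(1-\omega'_i)(1-\eps)/\pi^{i,0}_T(\omega)+\omega'_i(1+\eps)\pi^{0,i}_T(\omega)$, so that the convex hull of $\{\tV(\omega,\cdot)\}$ over the coin spins contains a cross-section of $(\tcK_T(\pi_T)(\omega))^*$ at every $\omega$. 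Your $\omega$-independent $\V$ has a fixed cone of feasible $Y$ which may or may not fit inside the random cone $\tcK^0_T(\pi_T)$; with tight (or random, possibly unbounded) bid--ask spreads it will not, for any fixed $\eps$.

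Second, the ``richness'' you flag as the main obstacle is not a technical selection lemma that can be supplied afterwards; it fails for your $\V$ for the same reason. The cone generated by $\{\E_{\bQ}[\V\mid\wt\cF_t]:\bQ\in\cQ\}$ lives inside a fixed cone determined by the support of $\V$, and equality with $(\tcK_t(\pi_t))^*$ would require $\eps$ to depend on $\sup_\omega\pi^{ij}$, which is not available (and, made random, would ruin the $\wt\cF_{T+1}$-measurability of $\V$). The paper sidesteps this entirely by indexing $\cQ$ by consistent price processes of $\cB_T(\pi)$ directly (via an explicit density $d\bQ^Z/d\wt\P$ built from $Z^0_T$ and a one-step change of measure over the coin spin), so that the set $\{\E_{\bQ}[\V\mid\wt\cF_t]:\bQ\in\cQ\}$ is by construction the set of time-$t$ values of consistent price processes; no richness lemma is needed. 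Relatedly, your appeal to \Cref{bigex} uses $\bQ\ll\wt\P$ where the example requires $\bQ\sim\wt\P$, and you still owe the verification that your $\cQ$ is closed and convex before \Cref{thm:main} applies. Finally, a Polish $E$ with a diffuse reference measure is an unnecessary complication: the finite set $\{0,1\}^d$ suffices and avoids closure and measurable-selection headaches in the extension of consistent price processes. The fix is to replace $\V=\xi+\eps\1$ by the paper's coin-spin $\tV$ (normalised), and to replace the constraint-based $\cQ$ by the CPP-indexed $\cQ$; with those changes the remaining steps of your outline do go through.
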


The key element in the proof is to add an extra trading period $(T,T+1]$ at the end in which all positions are cashed out into asset 0. However, we impose num\'eraire risks that are so disadvantageous as to force the agent to sell up in the preceding time period, rather than in the additional period. To generate the final, frictionless prices, we add on a simple ``coin spin'' for each other asset. We encode the $d$ binary choices (either buy or sell each of the other $d$ num\'eraires) as $\{0,1\}^d$, and define $\mu$ to be the uniform measure on $(\{0,1\}^d, 2^{\{0,1\}^d}) $. Thus, we define the augmented sample space $\wt \Omega := \Omega \times \{0,1\}^d$, and define the product sigma-algebra and measure:
\begin{align*}
\wt \cF	&:= \cF \otimes 2^{\{0,1\}^d}),
\\ \wt \P &:= \P \otimes \mu.
\end{align*}
We augment the filtration trivially, by setting $\wt \cF_t := \cF_t \otimes \{ \emptyset, \{0,1\}^d\}$. We employ the obvious embedding of $L^\infty(\Omega,\cF_t,\P)$ in $L^\infty(\wt\Omega,\wt\cF_t,\wt\P)$;  it should be clear from context  to which version of $L^0$ we are referring.

Fix $0<\eps<1$ small. Define the $\R^{d+1}$-valued random variable $\tV = (\tv^0,\tv^1, \dots, \tv^d)$, for  $\omega \in \Omega$, $\omega' \in \{0,1\}^d$, by
%
\begin{align}
\tv^0(\omega,\omega') &= 1,
\\  \tv^i(\omega,\omega') &= (1-\omega'_i)(1-\epsilon)\frac{1}{\pi^{i,0}_T(\omega)} + \omega'_i (1+\epsilon)\pi^{0,i}_T(\omega). \label{eq:Vfrompi}
\end{align}

The interpretation of $\tV$ is this: arriving at time $T$ at a bid-ask spread $\left[ \frac{1}{\pi^{i,0}_T(\omega)}, \pi^{0,i}_T(\omega) \right]$  for num\'eraire $i$ in state $\omega$, we spin a coin. If the coin shows heads ($\omega'_i = 1$), the $(T+1)$-price of asset $i$ is slightly higher than the $T$-ask price, and any  negative holding of $i$ to time $T+1$ makes a loss compared to cashing out at time $T$. If the coin shows tails ($\omega'_i = 0$), the $(T+1)$-price of asset $i$ is slightly lower than the $T$-bid price, and any  positive holding of $i$ makes a  loss. Any risk-averse agent will seek to avoid these losses by cashing out into asset 0 at time $T$.

Now we define the frictionless bid-ask matrix at time $T+1$ by
\[ \pi^{ij}_{T+1} := \frac{\tv^j}{\tv^i}.\]
The trading cone $\tcK^0_{T+1}(\pi_{T+1})$ is generated by positive  $\cF_{T+1}$-measurable multiples of the vectors $-e_i$ and $e_j - \pi^{ij}_{T+1} e_i$, for $i,j \in \{0,1,\dots,d\}$. Define the cone $\cB_{T+1}(\pi) = \cB_{T}(\pi)+\tcK^0_{T+1}(\pi_{T+1})$.

The collection of consistent price processes for the original set of claims $\cB_T(\pi)$ is 
\[ \cB^\circ_T(\pi) = \{ Z \in L^1_T(\R^{d+1}): \bE[Z| \cF_t] \in \tcK^0_t(\pi_t)^* \text{ a.s. for  }t=0,1,\dots,T \}.\]
By Theorem 4.11 of \cite{JBW}, since $\cB_T(\pi)$ is closed and has no arbitrage,
 there exists at least one consistent price process $Z$ for $\cB_T(\pi)$. The following proposition shows that the cone $\cB_{T+1}(\pi)$ is arbitrage-free.
\begin{prop}
There is a consistent price process for $\cB_{T+1}(\pi)$.


\end{prop}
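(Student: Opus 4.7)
My overall strategy is to start from a consistent price process for $\cB_T(\pi)$ (which exists by Theorem~4.11 of \cite{JBW}, since $\cB_T(\pi)$ is closed and arbitrage-free) and extend it one step into the auxiliary period $(T,T+1]$, using the freedom offered by the coin-flip coordinate $\omega'$. The extension is essentially forced: because the time-$(T{+}1)$ cone $\tcK^0_{T+1}(\pi_{T+1})$ is frictionless with price vector $\tV$, its polar is the random ray
\[ \tcK^0_{T+1}(\pi_{T+1})^* \;=\; \{\lambda\tV : \lambda\in L^0_+(\wt\cF_{T+1})\}, \]
so any extension must take the form $Y_{T+1}=\lambda\tV$, and the martingale requirement becomes $\bE[\lambda\tV\mid\wt\cF_T]=Z_T$, a conditional moment problem that must be solved over the $2^d$ sign patterns of $\omega'$.

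First I would pick a consistent price process $(Z_t)_{t=0}^T$ for $\cB_T(\pi)$ as given by Theorem~4.11 of \cite{JBW}. Working on the event $\{Z_T^0>0\}$ (noting that $Z_T\in\tcK^0_T(\pi_T)^*$ forces either $Z_T\equiv 0$ or every component strictly positive, because $Z_T^j\le\pi^{ij}_T Z_T^i$ and $Z_T^i\ge 0$ for all $i,j$), I would write the bid-ask inequalities satisfied by $Z_T$ as
\[ \tfrac{1}{\pi^{i,0}_T} \;\le\; \tfrac{Z_T^i}{Z_T^0} \;\le\; \pi^{0,i}_T, \qquad i=1,\ldots,d. \]
With $a_i:=(1-\epsilon)/\pi^{i,0}_T$ and $b_i:=(1+\epsilon)\pi^{0,i}_T$, the strict positivity of $\epsilon$ yields the strict sandwich $a_i < Z_T^i/Z_T^0 < b_i$, so the scalars
\[ q_i \;:=\; \frac{Z_T^i/Z_T^0 - a_i}{b_i - a_i} \;\in\;(0,1) \]
are well-defined and $\cF_T$-measurable. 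I then define $\lambda$ on $\{Z_T^0>0\}$ by
\[ \lambda(\omega,\omega') \;:=\; Z_T^0(\omega)\,\prod_{i=1}^d \ell_i(\omega'_i,\omega), \qquad \ell_i(1,\omega):=2q_i(\omega),\ \ell_i(0,\omega):=2(1-q_i(\omega)), \]
and $\lambda:=0$ elsewhere. Since each $\ell_i$ is non-negative with $\bE_\mu[\ell_i]=1$, clearly $\lambda\in L^0_+(\wt\cF_{T+1})$.

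To verify that $Y_{T+1}:=\lambda\tV$ closes the martingale, I would compute under $\wt\bP=\bP\otimes\mu$, using independence of the coins and of $\omega'$ from $\cF_T$: the $0$-component gives $\bE[\lambda\tv^0\mid\wt\cF_T]=Z_T^0\cdot\prod_i\bE_\mu[\ell_i]=Z_T^0$, and for $i\ge 1$, since $\tv^i$ depends on $\omega'$ only through $\omega'_i$,
\[ \bE[\lambda\tv^i\mid\wt\cF_T] \;=\; Z_T^0\,\bE_\mu[\ell_i(\omega'_i)\tv^i]\prod_{j\ne i}\bE_\mu[\ell_j] \;=\; Z_T^0\bigl((1-q_i)a_i + q_i b_i\bigr) \;=\; Z_T^i, \]
by the definition of $q_i$. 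Appending $Y_{T+1}$ to $(Z_t)_{t\le T}$ (trivially extended to the augmented filtration) therefore yields a martingale in the polar cones $\tcK^0_t(\pi_t)^*$ for every $t\in\{0,\ldots,T+1\}$, i.e., a consistent price process for $\cB_{T+1}(\pi)$.

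The main obstacle, and the substantive content, is the strict inclusion $a_i < Z_T^i/Z_T^0 < b_i$: without the parameter $\epsilon>0$ this would collapse at the boundary of the bid-ask spread and $q_i$ could degenerate to $\{0,1\}$, making the coin-flip construction fail. This is exactly the point of the $\epsilon$-perturbation in the definition \eqref{eq:Vfrompi} of $\tV$, and it is what makes the ``cash-out'' num\'eraires slightly outside the time-$T$ spread so that the extension exists.
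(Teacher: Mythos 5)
Your proposal is correct and follows essentially the same route as the paper: you construct the one-step extension by solving the conditional moment problem on each binomial coin, with $q_i$ playing exactly the role of the paper's $\theta(\omega,i)$ and your $\lambda$ being the paper's $Z^0_T\lambda^Z$, and the key observation in both is that the $\epsilon$-slack makes the bid-ask sandwich strict so that the heads-probabilities land in the open interval $(0,1)$. The only points the paper addresses that you gloss over are the integrability of $\lambda\tV$ (checked via Fubini and positivity) and the positivity of the extended CPP, but these are routine and do not change the argument.
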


\begin{proof}
We extend any consistent price process for $\cB_T(\pi)$ to a consistent price process for $\cB_{T+1}(\pi)$ by multiplying by the Radon-Nikodym derivative for the martingale measure for each coin spin. 
For any $Z\in \cB^\circ_T$, define $\lambda^Z >0$  such that the one-period process $\left(Z^i/Z^0, \, \lambda^Z \tv^i\right)$ is a $\wt \P$-martingale for each $i$. Then 
\begin{equation}\label{eq:extraPeriodCPP}
Z_{T+1} = Z^0 \lambda^Z \tV 
\end{equation}
defines a consistent price process for the cone $\cB_{T+1}(\pi)$.

We first show that such a $\lambda^Z$ always exists. Note that $Z \in \tcK^0_T(\pi_T)^*$ gives that, $\omega$-a.e.,
\[
\frac{Z_T^j(\omega)}{Z_T^i(\omega)} \le \pi_T^{ij}(\omega) \le \pi_T^{i,0}(\omega) \pi_T^{0,j}(\omega) < \frac{1+\epsilon}{1-\epsilon}\pi_T^{i,0}(\omega) \pi_T^{0,j}(\omega)  = \frac{\tv^j(\omega,1)}{\tv^i(\omega,0)}, 
\]
with $\tv^j(\omega,1)$ understood to be $\left.\tv^j(\omega,\omega') \right|_{\omega'_j=1}$ etc.
Fixing $\omega \in \Omega$ and $i\neq 0$, we see that
\[ \ol Z_T^i(\omega) := \frac{Z_T^i(\omega)}{Z_T^0(\omega)} \in (\tv^i(\omega,0),\tv^i(\omega,1)).\]
The martingale measure for such an one-period binary tree model is determined by the probability of ``heads''
\[ \theta(\omega,i) = \frac{\ol Z^i_T-\tv^i(\omega,0)}{\tv^i(\omega,1)-v^i(\omega,0)}.\]
Now set 
\[ \lambda^Z(\omega,\omega') = 2^d\prod_{i=1}^d \theta(\omega,i)^{\omega'_i} (1-\theta(\omega,i))^{1-\omega'_i}\]
Clearly,  $\lambda^Z$ is a.s. positive and bounded, $\tilde\E[\lambda^Z|\cF_T]=1$ and  $Z^i_{T+1}\in \cL^1$ since $\tilde \E[Z^i_{T+1}]
=\tilde \E[\lambda^Z\tV_i Z^0_T]=\E[Z^i_T]$ by the the positivity of $Z^i_{T+1}$, Fubini's Theorem and the definition of $\mu$ and $\lambda^Z$
 
Similarly, for any $X_T \in L^\infty_T(\R^{d+1})$, 
\begin{equation} \label{eqn:extendCPPexp}
\bE_{\wt \P}[X_T \cdot Z_{T+1}] = \bE_{ \P}[X_T \cdot Z^0_T\bE_{\wt \P}[( \lambda^Z \tV)|\cF_T]] =  \bE_{ \P}[X_T \cdot Z_{T}].
\end{equation}
Setting $X_T = \one{A} e_i$ for $A \in \cF_T$, for any $i$, we see that $Z_T = \bE_{\wt \P}[Z_{T+1}|\cF_T]$, and $Z_{T+1}$ is thus a consistent price process as required.
\end{proof}

\begin{prop}
The cone $\cB_{T+1}(\pi):=\oplus_{t=0}^{T+1}\tcK^0(\pi_t)$ is closed in $L^0$, and is arbitrage-free.
\end{prop}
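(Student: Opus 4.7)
The plan has two parts: establishing no-arbitrage using the strictly positive consistent price process supplied by the previous proposition, and establishing $\cL^0$-closedness by verifying the hypotheses of Lemma~\ref{s}.

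For arbitrage-freeness I would start from a strictly positive consistent price process $Z$ for $\cB_T(\pi)$, which exists by Theorem~4.11 of~\cite{JBW} since $\cB_T(\pi)$ is closed and arbitrage-free. Its extension $Z_{T+1}=Z^0_T\lambda^Z\tV$ from~\eqref{eq:extraPeriodCPP} is again strictly positive: $Z^0_T>0$ and $\lambda^Z>0$ a.s.\ by the binary-tree construction, and each $\tv^i>0$ by~\eqref{eq:Vfrompi}. Given any $X=\sum_{t=0}^{T+1}\xi_t\in\cB_{T+1}(\pi)\cap\cL^0_+$, the martingale property together with the $\cF_t$-measurability of $\xi_t$ gives $\bE[\xi_t\cdot Z_{T+1}]=\bE[\xi_t\cdot Z_t]$ for $t\le T$; combined with the dual-cone inequalities $\xi_t\cdot Z_t\le 0$ a.s.\ and $\xi_{T+1}\cdot Z_{T+1}\le 0$ a.s., this yields $\bE[X\cdot Z_{T+1}]\le 0$. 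Since $X\ge 0$ componentwise and $Z_{T+1}>0$ componentwise, $X\cdot Z_{T+1}\ge 0$ a.s., so $X\cdot Z_{T+1}=0$ a.s.\ and $X=0$.

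For closedness I plan to apply Lemma~\ref{s} to the decomposition $\cB_{T+1}(\pi)=\tcK^0_0(\pi_0)+\cdots+\tcK^0_{T+1}(\pi_{T+1})$. Each summand is a closed trading cone, so it remains to check that the null strategies form a vector space. Given a null strategy $(\xi_0,\dots,\xi_{T+1})$, the key observation is that $-\xi_{T+1}=\sum_{t=0}^T\xi_t$ is $\cF_T$-measurable, so $\xi_{T+1}$ itself is $\cF_T$-measurable. Pairing with $Z_{T+1}$ as above, each nonpositive term $\bE[\xi_t\cdot Z_t]$ ($t\le T$) and $\bE[\xi_{T+1}\cdot Z_{T+1}]$ must vanish; in particular $\xi_{T+1}\cdot Z_{T+1}=0$ a.s., and since $Z_{T+1}$ is a strictly positive scalar multiple of $\tV$, $\xi_{T+1}\cdot\tV=0$ a.s.

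I expect the decisive---and most delicate---step to be concluding $\xi_{T+1}=0$ from $\xi_{T+1}\cdot\tV=0$, which is exactly the payoff of the coin-flip design of $\tV$. For $\bP$-a.e.\ $\omega$, the bid-ask inequality $\pi^{i,0}_T(\omega)\pi^{0,i}_T(\omega)\ge 1$ combined with~\eqref{eq:Vfrompi} gives $\tv^i(\omega,0)<\tv^i(\omega,1)$. Holding $\omega$ fixed and varying $\omega'_i$ alone in the identity $\xi^0_{T+1}(\omega)+\sum_{i=1}^d\xi^i_{T+1}(\omega)\tv^i(\omega,\omega')=0$---valid for every $\omega'\in\{0,1\}^d$ because $\xi_{T+1}$ does not depend on $\omega'$---forces $\xi^i_{T+1}(\omega)=0$ for each $i\ge 1$, and then $\xi^0_{T+1}(\omega)=0$. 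With $\xi_{T+1}=0$, $(\xi_0,\dots,\xi_T)$ is a null strategy for $\cB_T(\pi)$, and the vector-space property from Theorem~1.2 of~\cite{JBW} gives $-\xi_t\in\tcK^0_t(\pi_t)$ for $t\le T$, completing the verification of the hypotheses of Lemma~\ref{s} and hence closedness.
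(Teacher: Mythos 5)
Your proposal is correct in overall architecture and reaches the right conclusion, but it takes a genuinely different route from the paper at the decisive step, and it has a technical gap you should patch. The paper first cites Theorem~4.11 of Jacka--Berkaoui--Warren to get no-arbitrage of the $\cL^0$-closure of $\cB_{T+1}(\pi)$, and then proves closedness, after which no-arbitrage of $\cB_{T+1}(\pi)$ itself is automatic; it does not give a direct pairing argument as you do. Both routes are legitimate. For the null-strategy verification the paper shows, via truncation and bipolarity, that $x_{T+1}\in\cB_T(\pi)$, decomposes $x_{T+1}=y_0+\cdots+y_T$, and from the null strategy $(x_t+y_t)_{t\le T}$ and the vector-space property for $\cB_T(\pi)$ deduces $-x_t\in\tcK^0_t(\pi_t)$; for the last coordinate it writes $x_{T+1}=u_1-u_2$ with $u_1$ in the lineality space and $u_2\ge 0$, and kills $u_2$ by no-arbitrage, concluding only that $-x_{T+1}\in\tcK^0_{T+1}(\pi_{T+1})$. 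Your coin-flip argument instead shows outright that $\xi_{T+1}=0$; this is a stronger and cleaner conclusion, exploiting the fact that $\xi_{T+1}$ is $\cF_T$-measurable while $\tV$ genuinely varies with $\omega'$, and it makes the paper's decomposition of $x_{T+1}$ and the $u_1-u_2$ split unnecessary. This is a nice shortcut and a real content difference.

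The gap is integrability. You write $\bE[\xi_t\cdot Z_{T+1}]=\bE[\xi_t\cdot Z_t]$ and then argue that a sum of nonpositive terms equal to zero forces each term to vanish. But the $\xi_t$ live only in $\cL^0_t$, not $\cL^1$, so neither the tower-property identity nor the interchange $\bE[\sum_t \xi_t\cdot Z_{T+1}]=\sum_t\bE[\xi_t\cdot Z_{T+1}]$ is justified as stated; individual terms could fail to be integrable even though the total is zero pointwise. The paper handles exactly this difficulty by truncating $x_{T+1}$ with $\ind{\|x_{T+1}\|<n}$ (a bounded $\cF_T$-measurable multiplier, so the truncated vector stays in $\tcK^0_{T+1}(\pi_{T+1})$), working with the now-integrable quantities, and using closure of $\cB_T(\pi)$ to remove the truncation. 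Your step deducing $\xi_{T+1}\cdot Z_{T+1}=0$ a.s.\ needs the same device: truncate $\xi_{T+1}$, pair the truncation with $Z_{T+1}$ and with $Z_T$, note both pointwise signs, and pass to the limit. The same caveat applies to your direct no-arbitrage argument: the relation $\bE[X\cdot Z_{T+1}]=\sum_t\bE[\xi_t\cdot Z_t]$ is not available without truncation, and the straightforward fix (multiplying a whole strategy by a common $\cF_T$-measurable indicator) does not preserve membership of $\xi_t$ in $\tcK^0_t(\pi_t)$ for $t<T$. It is cleaner to follow the paper's logic: establish closedness first, then inherit no-arbitrage from the closure via the cited theorem.
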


\begin{proof}

From Theorem 4.11 of \cite{JBW}, we have that the closure of $\cB_{T+1}(\pi)$ in $L^0$ is arbitrage-free. We will show  that the set of null strategies $\cN(\tcK^0_0(\pi_0),\dots,\tcK^0_T(\pi_T),\tcK^0_{T+1}(\pi_{T+1}))$ is a vector space, and  conclude from Lemma \ref{null} that the cone $\cB_{T+1}(\pi)$ is closed in $L^0$, and we are done.

 Take 
\[ (x_0, \dots,x_{T+1}) \in \cN(\tcK^0_0(\pi_0),\dots,\tcK^0_T(\pi_T),\tcK^0_{T+1}(\pi_{T+1})).\]
Let 
\begin{equation}\label{trick}
x = x_0+\dots+x_T\text{, so that }x+x_{T+1}=0.
\end{equation}
 We see that $x_{T+1}$ is an $\cF_T$-measurable element of $\tcK^0_{T+1}(\pi_{T+1})$. We claim that $x_{T+1} \in\cB_T(\pi)$: since $x_{T+1}\in\cB_{T+1}$, for any consistent price process $Z$ for the cone $\cB_{T+1}(\pi)$, and any $n \in \bN$,
\[ 0\geq  \bE[Z_{T+1}x_{T+1} \ind{||x_{T+1}|| < n}] =\bE[Z_Tx_{T+1} \ind{||x_{T+1}|| < n}] =,\]
so $x_{T+1} \ind{||x_{T+1}|| < n} \in \cB_T(\pi)$ for all $n$, and so $x_{T+1} \in\cB_T(\pi)$ by closure of $\cB_T(\pi)$.

Since $x_{T+1} \in\cB_T(\pi)$, there exist $y_0 \in \tcK^0_0(\pi_0), \dots, y_T\in \tcK^0_T(\pi_T)$ with $x_{T+1}= y_0+\dots+y_{T} $. 
Then, rewriting (\ref{trick}), we see that
\[
(x_0+y_0)+\ldots + (x_T+y+T)=0.
\]
Since each term in this sum is in the relevant trading cone, we see that $(x_0+y_0),\ldots ,(x_T+y_T))$ is in $\cN(\tcK^0_0,\ldots,\tcK^)_T$. Now, by assumption, this is a vector space so that each $-(x_t + y_t) \in \tcK^0_t(\pi_t)$, and so, since  $\tcK^0_t(\pi_t)$ is a cone containing $y_t$, each $-x_t$ is in $\tcK^0_t(\pi_t)$.

The bid-ask prices are frictionless at time $T+1$, so  $x_{T+1} \in \tcK^0_{T+1}(\pi_{T+1})$ may be written as $u_1-u_2$, where $u_1\in lin(\tcK^0_{T+1}(\pi_{T+1}))$, and $u_2 \ge 0$. Note that
\[ 0 \le u_2 = u_1-x_{T+1}=u_1 + x \in \cB_T(\pi),\]
but since  $\cB_T(\pi)$ is arbitrage-free, $u_2=0$, and so $-x_{T+1} \in \tcK^0_{T+1}(\pi_{T+1})$ and thus the set of null strategies is a vector space.
\end{proof}

The final prices $\tV$ above are, in general unbounded, so we transform these by normalising, setting
$$
\V=(v_0,\ldots,v_d) \text{ where }v_i :=\frac{\tv_i}{\sum_j\tv_j}.
$$

Finally, we define the set of measures
\[ 
\cQ = \{\bQ^Z:\; Z\text{ is a consistent price process for }\cB_T(\pi)\},
\text{ where }
\frac{d\bQ^Z}{d\wt\P}:= \frac{Z^0_T \lambda^{Z_T} (\tv_0)}{\sum_j Z^j_0}.
\]
It is easy to check that these are probability measures from the fact that the $Z$'s are consistent price processes and hence strictly positive, vector-valued martingales.

The proof of the main result is now clear:

\begin{proof}[Proof of \Cref{thm:tradingCone}]
We observe that 
\begin{align*}
\K^0_t(\cA_{\cQ}(\V))&=\{X\in\cL^0_t:\; X.\E_{\bQ}[\V|\cF_t]\leq 0 \text{ a.s. for all }\bQ\in\cQ\}\\
&=\{X\in\cL^0_t:\; X.Z_t\leq 0 \text{ a.s. for all consistent }Z\}\\
\end{align*}
It follows that 
\[
\cK^0_t(\pi_t)=\{Y\in\cL^0_t:\; Y.Z_t\leq 0 \text{ a.s. for all consistent }Z\}=
\{Y\in\cL^0_t:\; Y\in\K^0_t(\cQ)\}
\]
and so
\[
\cB_T=\cA^0_{\cQ}
\]
\end{proof}






\bibliographystyle{plain}


\appendix

\section{Appendix}

\subsection{Proofs of subsidiary results}\label{sec:lempf}

\begin{proof}[\textbf{Proof of \Cref{dualv}}]
First take $Z\in \cD ^*$. For any $X\in \cD (\mathbf{V})$ we have $\bE[ZV\cdot X] \le 0$ and so $ZV \in \cD (\mathbf{V})^*$, thus $\cD (\mathbf{V})^* \supseteq \cD ^*\mathbf{V}$. 

For the reverse inclusion, denote by $e_i$ the $i$th canonical basis vector in $\R^{d+1}$. First, since $\mathbf{V}\cdot \alpha(v^ie_j-v^je_i)=0$, we have
\[ \alpha(v^ie_j-v^je_i) \in \cD (\mathbf{V}) \qquad \forall \alpha \in L^\infty.\]
Take $Z \in \cD (\mathbf{V})^*$. Now, for any $i,j \in \{1,\dots,d\}$, $\alpha \in L^\infty$, we have
\[ \bE[Z \cdot  \alpha(v^ie_j-v^je_i)] \le 0.\]
Reversing $i$ and $j$ in the above, we may write $\bE[Z \cdot  \alpha(v^ie_j-v^je_i)] = 0$, and allowing first $\alpha= \ind{Z \cdot  (v^ie_j-v^je_i) >0}$ then $\alpha= \ind{Z \cdot  (v^ie_j-v^je_i) <0}$, we see that in fact,
\[ Z\cdot  (v^ie_j-v^je_i) = 0 \qquad \text{a.s. for any }i,j,\]
and so, taking $i=0$ we have $ Z^j=Z^0v^j $ a.s. for each $j$, thus any $Z \in \cD (\mathbf{V})^*$ must be of the form $Z^0\mathbf{V}$ for some $Z^0 \in L^1$. Now, given $C\in \cD $, take $X$ such that $X\cdot \mathbf{V}=C$ (which implies that $X\in \cD (\mathbf{V})$), then 
\[0 \ge \bE[W\mathbf{V}\cdot X]=\bE[WC],\]
and since $C$ is arbitrary, it follows that $W\in \cD ^*$. Hence $\cD (\mathbf{V})^*\subseteq \cD ^*\mathbf{V}$.
\end{proof}

\begin{proof}[\textbf{Proof of \Cref{prop:eqstab}}] By \Cref{dualv}, we may write $\cD  = \cone \{ \frac{d\bQ}{d\P} \V : \bQ \in \cQ\}$.

\emph{(i)$\implies$(ii):} We suppose that (i) holds, and fix $t \in \{0,1,\dots, T\}$, $Y,W,Z \in \cD$, $F \in \cF_t$, $\alpha,\beta \in \cL^0_+(\cF_{t+1})$ and $X \in \cL^1_+$ as in the hypothesis of (ii). We show that $X \in \cD$ by applying (i) twice.
First, take $\tau = t \one{F} +T\one{F^c}$, and define $\bQ^Z$ and $\Lambda^Z_t$ via 
\[\frac{d\bQ^Z}{d\P} = \frac{Z^0}{\bE[Z^0]}  \q{and}  \Lambda^Z_t = \condE{\frac{d\bQ^Z}{d\P}}{\cF_t}  \]
and analogously for $\bQ^Y$, $\Lambda^Y$. Note that $Z = \bE[Z^0] \Lambda^Z \V$. We now form an optional pasting of $\bQ^Z$ and $\bQ^Y$ at time $\tau$, as $\wh \bQ$, via
\[ \wh \Lambda = \Lambda^Z_t \left(\frac{\alpha \bE[Y^0]\Lambda^Y_{t+1} }{\bE[Z^0]\Lambda^Z_t} \right) \frac{\Lambda^Y}{\Lambda^Y_{t+1}} \one{F} + \Lambda^Z \one{F^c}. \]
This is an optional pasting, thanks to \cref{eq:conestabcond}: on $F$, we have $\condE{\alpha Y^0}{\cF_t} = \condE{Z^0}{\cF_t}$, and so the factor in parentheses has conditional $\cF_t$-expectation of 1 on $F$. We shall apply (i) to deduce that  $\wh \bQ \in \cQ$, and for this we must show that 
\[ \bE_{\wh \bQ} [\V|\cF_\tau] = \bE_{ \bQ^Z} [\V|\cF_\tau].\]
We compute the left hand side to be
\begin{align*}
\bE_{\wh \bQ} [\V|\cF_\tau] &= \frac{1}{\wh \Lambda_t} \condE{\wh \Lambda \V}{\cF_t} \one{F} + \V \one{F^c} 
\\&=  \condE{\left(\frac{\alpha \bE[Y^0]\Lambda^Y_{t+1} }{\bE[Z^0]\Lambda^Z_t} \right) \frac{\Lambda^Y}{\Lambda^Y_{t+1}} \V}{\cF_t} \one{F} + \V \one{F^c} 
\\&= \frac{1}{\condE{Z^0}{\cF_t}}\condE{\alpha  Y}{\cF_t} \one{F} + \V \one{F^c}
\end{align*}

Condition \eqref{eq:conestabcond} shows that, on $F$, $\condE{\alpha  Y}{\cF_t} = \condE{Z}{\cF_t}$, so we conclude that $\wh \bQ \in \cQ$. We repeat the above steps for stopping time $\sigma = T \one F +t \one{F^c}$, measures $\wh \bQ$ and $\bQ^W$,
\[ \wt \Lambda = \wh\Lambda \one{F} + \wh \Lambda_t \left(\frac{\beta \bE[W^0]\Lambda^W_{t+1} }{\bE[Z^0]\Lambda^Z_t} \right) \frac{\Lambda^W}{\Lambda^W_{t+1}} \one{F^c} . \]
Condition \eqref{eq:conestabcond} gives that $\bE_{\wt \bQ} [\V|\cF_\tau] = \bE_{ \wh\bQ} [\V|\cF_\tau]$, and so $\wt \bQ \in \cQ$ by (i). It is simple to show that $X = \bE[Z^0] \wt \Lambda \V$, and thus $X \in \cD$ as required.

\emph{(ii)$\implies$(i):} Say (ii) holds; then (i) holds  for when $\tau = T$ trivially. Now suppose that (i) holds for any stopping time $\tau \ge k+1$ a.s., and proceed by backward induction on the lower bound of the stopping times. Fix an arbitrary stopping time $\wt \tau \ge k$ a.s., and define $F=\{ \wt \tau \ge k+1\}$ and the stopping time $\tau^* := \wt \tau \one F + T \one{F^c}$. Note that $\tau^* \ge k+1$, since $F^c = \{ \wt \tau = k\}$.

We shall now take $\bQ^1, \bQ^2 \in \cQ$ and $\wt \bQ \in \bQ^1 \oplus^{\text{opt}}_{\wt \tau} \bQ^2$ that satisfy  \cref{eq:pmstabcond}, and aim to show that $\wt \bQ$ is indeed an element of $\cQ $, with the help of condition (ii).
Define $\Lambda^i = d\bQ^i / d\P$ for $i=1,2$. Take a pasting of $\bQ^1$ and $\bQ^2$ at time $\tau^*$, $ \bQ^* \in \bQ^1 \oplus^{\text{opt}}_{ \tau^*} \bQ^2$, with Radon-Nikodym derivative
\[ \Lambda^* = \Lambda^1_{\tau^*} R^* \frac{\Lambda^2}{\Lambda^2_{(\tau^*+1)\wedge T}} \]
with $R^* \in \cL^1_+(\cF_{(\tau^*+1)\wedge T})$ and $\condE{R^*}{\cF_{\tau^*}}=1$. 
We note that  $\wt \Lambda := d \wt \bQ / d\P$ can be written as
\[ \wt \Lambda =  \Lambda^1_{\wt \tau} \wt R \frac{\Lambda^2}{\Lambda^2_{(\wt \tau+1)\wedge T}} = \Lambda^1_{\tau^*} \wt R \frac{\Lambda^2}{\Lambda^2_{(\tau^*+1)\wedge T}} \one F + \Lambda^1 \one{F^c}. \]
Set $X = \wt \Lambda \V$, $W=Z=\Lambda^1\V$,  $Y=\Lambda^2$, $\alpha=\wt R / R^*$, $\beta=1$ to satisfy the hypothesis of (ii). Thus, $X \in \cD$, whence $\wt \bQ \in \cQ$. This completes the inductive step.

\end{proof}

\begin{proof}[\textbf{Proof of \Cref{lem:A1}}]
The inclusion $\cD  \subset \cap_{t=0}^{T} \cM_t(\cD )$ is trivial. In the following, we write $Z|_t$ for $\condE{Z}{\cF_t}$.

Now $Z \in \cap_{t=0}^{T} \cM_t(\cD )$, and we aim to show that $Z\in \cD $. So, for all $t \in \{0,1,\dots, T\}$, there exist $\beta_t \in  \ul{L}^0_{t,+}$ and $Z^t \in \cD $ such that $\beta_t Z \in \ul{\cL}^1 \text{ and } Z|_{t}= \beta_t Z^t|_{t}$.

Define
\begin{align*}
\xi^{T}&=Z^{T} \\
\xi^t &= \one{F_t} \kappa_t \xi^{t+1}+\one{F^c_t}Z^t &\for t \in \{0,1,\dots, T-1\},
\end{align*}
where $F_t = \{ \beta_t>0\} $ and $\kappa_t = \beta_{t+1}/\beta_t$.

Note $ Z=Z|_T=\beta_{T}Z^{T}|_T=\beta_{T}\xi^{T}$ and 
\[Z=\beta_0 \kappa_0 \kappa_1 \cdots \kappa_{T-2} \xi^{T-1} = \beta_0 \xi^0. \]
Thus we only need to show $\xi^0$ is in the cone $\cD $ to deduce that $Z=\beta_0 \xi^0$ is in $\cD $. 
\paragraph{Claim} For all $t \in \{0,1,\dots, T\}$, we have $\xi^t|_{t}=Z^t|_{t}$ and $\xi^t \in \cD $.

We shall proceed by backwards induction, starting from the observation $\xi^{T} = Z^{T} \in \cD $. Suppose that for $s \ge t+1$, we have $\xi^s|_{s}=Z^s|_{s}$ and $\xi^s \in \cD $.
\begin{align*}
\xi^t|_{t} &= \condE{\one{F_t} \kappa_t \xi^{t+1}+\one{F^c_t}Z^t}{\ul{\cF}_{t}}
\\&=\condE{\one{F_t} \kappa_t Z^{t+1}+\one{F^c_t}Z^t}{\ul{\cF}_{t}}
\end{align*}
Now, if $\beta_t>0$, i.e. on the event $F_t$,
\[ \condE{ \kappa_t Z^{t+1}}{\ul{\cF}_{t}} = \frac{1}{\beta_t}\condE{ \beta_{t+1} Z^{t+1}}{\ul{\cF}_{t}} = \frac{1}{\beta_t}\condE{ Z|_{t+2}}{\ul{\cF}_{t}} = \frac{Z|_{t+1}}{\beta_t}  = Z^{t}|_{t} \]
allowing us to conclude  
\[ \xi^t|_{t} = \condE{\one{F_t} Z^{t}|_{t}+\one{F^c_t}Z^t}{\ul{\cF}_{t}}= Z^{t}|_{t}.\]
By hypothesis $\cD $ is stable, so by the alternative definition of stability (\Cref{prop:eqstab}), we see that $\xi^t \in \cD $.
\end{proof}

\begin{proof}[\textbf{Proof of \Cref{lem:A2}}]
It is clear that $[\cD ]$ is a closed convex cone in $\ul{\cL}^1$. To see that $[\cD ]$ is stable, we use  \Cref{prop:eqstab}. Fix $t\in \{0,1,\dots, T\}$, and suppose $Y,W \in [\cD ]$ are such that there exists $Z\in[\cD ]$, a set $F\in \ul{\cF}_t$, positive random variables $\alpha, \beta \in \ul{\cL}^0(\ul{\cF}_t)$ with $\alpha Y, \beta W \in \ul{L}^1(\R^{d+1})$ and 
\[ X:= \alpha Y  \one{F} +  \beta W\one{F^c}\]
satisfies $\condE{X}{\ul{\cF}_t}=\condE{Z}{\ul{\cF}_t}$. We aim to show $X$  is also a member of $[\cD ]$, that is,
\[X \in   \ol{\conv} \cM_s(\cD ) \qquad \forall  0 \le s  \le T-1 .\]

First consider $s \in \{0,1,\dots,  t-1 \}$. From the definition of $\cM_s(\cD )$,
\[ Z \in \conv\cM_s(\cD ) \q{and} \condE{X}{\ul{\cF}_t} = \condE{Z}{\ul{\cF}_t}  \qiq X \in\conv \cM_s(\cD ), \]
since the membership of an integrable $Z$ in $\cM_s(\cD )$ only depends on its conditional expectation $ \condE{Z}{\ul{\cF}_{s}}$.
More generally, we show
\[ Z \in  \ol{\conv}\cM_s(\cD ) \q{and} \condE{X}{\ul{\cF}_t} = \condE{Z}{\ul{\cF}_t}  \qiq X \in  \ol{\conv}\cM_s(\cD ). \]
 Take a sequence $(Z^n) \subset \conv \cM_s(\cD )$ such that $Z^n \to Z$ in $\ul{\cL}^1$. Define the sequence
\[ X^n :=  \condE{Z^n}{\ul{\cF}_t} + X-  \condE{X}{\ul{\cF}_t} .\]
Note that $X^n \to X$ as $n \to \infty$ and for each $n$, $\condE{X^n}{\ul{\cF}_t} = \condE{Z^n}{\ul{\cF}_t}$. So $X^n \in \conv \cM_s(\cD )$, thus $X \in  \ol{\conv}\cM_s(\cD )$.

Now consider $s \in \{t,t+1,\dots,  T-1 \}$. We begin by choosing sequences $(Y^n), (W^n) \subset \conv \cM_s(\cD )$ such that $Y^n \to Y$ and $W^n \to W$ in $\ul{\cL}^1$. Define, for $n,K \in \bN$,
\[ X^{n,K}:= \ind{\alpha \le K}  \alpha Y^n\one{F} + \ind{\beta \le K}  \beta W^n\one{F^c}. \]
The fact that $X^{n,K} \in \conv \cM_s(\cD )$ follows from the following two elementary properties:
\begin{enumerate}
\item
if $Z \in \conv\cM_s(\cD )$ and $g \in \ul{\cL}^\infty_+(\ul{\cF}_t)$, then $gZ \in \conv\cM_s(\cD )$;\footnote{Let $Z \in \cM_s(\cD )$. Then 
\begin{align*}
\exists \alpha_t \in \ul{L}^0_{t,+},  \exists Z'\in \cD  &\text{ such that } \alpha_t Z \in \ul{\cL}^1 \text{ and } Z|_{t}= \alpha_t Z'|_{t}
\\\implies\exists \alpha_tg \in \ul{L}^0_{t,+},  \exists Z'\in \cD  &\text{ such that } \alpha_t gZ \in \ul{\cL}^1 \text{ and } gZ|_{t}= \alpha_tg Z'|_{t}
\end{align*}and then take convex hulls.} and
\item
if $Z^i \in \conv\cM_s(\cD )$ for $i=1,2$, then $Z^1+Z^2 \in \conv\cM_s(\cD )$.
\end{enumerate}

Now, for any $K$ fixed, $\ind{\alpha \le K}  \alpha Y^n \to \ind{\alpha \le K}  \alpha Y$ as $n \to \infty$, and similarly $ \ind{\beta \le K}  \beta W^n \to  \ind{\beta \le K}  \beta W$. Since $\alpha Y$ and $\beta W$ are integrable, we now send $K \to \infty$ to see that
\[ X = \lim_{K\to \infty} \lim_{n \to \infty} X^{n,K} \in  \ol{\conv} \cM_s(\cD ) \]
which completes the proof that $X$ is indeed a member of $[\cD ]$.

To show minimality of $[\cD ]$ in the class of stable closed convex cones containing $\cD $, we note that if $\cD  \subset \cD '$ then $ [\cD ] \subset [\cD ']$. Taking $\cD '$ to be another stable closed convex cone containing $\cD $, we have $\cD '=[\cD ']$ by \Cref{lem:A1}, and so $\cD '$ contains $[\cD ]$. 
To show the equivalence in statement (b), the forward implication is due to the stability of $[\cD ]$, and the reverse is \Cref{lem:A1}.
\end{proof}

\begin{proof}[Proof of \Cref{thm:crucialClaim}] We set $\cB =\cA(\mathbf{V})$,   as above. 

First we prove that $\cM_t(\cB^*) \subset K_t(\cB)^*$. For arbitrary $ Z\in \cM_t(\cB^*)$, there exist $Z'\in \cB^*$ and $\alpha \in\ul L^0_+(\cF_t)$ with $\alpha Z' \in\ul \cL^1$ and $Z|_{t}=\alpha Z'|_{t}$. 

Note that, for any $X \in K_t(\cB)$,
\[ \bE[Z\cdot X]=\bE[Z|_{t}\cdot X] =\bE[\alpha Z'|_{t}\cdot X] = \lim_{n \to \infty} \bE[(\alpha \ind{\alpha \le n}X)\cdot Z'|_{t}] \le 0,\]
since $\alpha \ind{\alpha \le n} X\in \cB$ and $Z'\in \cB^*$. Hence $Z \in K_t(\cB)$, and since $Z$ is arbitrary, we have shown that $\cM_t(\cB^*) \subset K_t(\cB)^*$. 

For the reverse inclusion, $\cM_t(\cB^*)^* \subset K_t(\cB)$, note that $\cB^* \subset \cM_t(\cB^*) $ implies $ \cM_t(\cB^*)^*  \subset \cB$, and 
\begin{align*}
\ul{\cL}^\infty_+(\ul{\cF}_t) \cM_s(\cD ) =\cM_s(\cD ) &\qiq  \for X\in \cM_t(\cB^*)^*, \quad  g \in \ul{\cL}^\infty_+(\ul{\cF}_t),\quad \bE[X\cdot gZ] \le 0
\\& \qiq \ul{\cL}^\infty_+(\ul{\cF}_t) \cM_t(\cB^*)^* = \cM_t(\cB^*)^*.
\end{align*}
Define \[\cB_t:= \{ X \in\ul \cL^\infty(\ul\cF_T,\R^{d+1}): g X \in \cB \text{ for any } g \in\ul L^\infty_+(\ul\cF_t) \}.\]
Thus $\cM_t(\cB^*)^* \subseteq \cB_t$.
To finish the proof, we need only show that $X \in \cM_t(\cB^*)^*$ is $\cF_{t}$-measurable, since $\cB_t \cap\ul \cL^\infty(\ul\cF_{t},\R^{d+1}) = K_t(\cB)$.

To this end, note that for any $Z \in\ul \cL^1(\R^{d+1})$, it is true that $Z-Z|_{t} \in \cM_t(\cB^*)$, whence $\bE[(Z-Z|_{t})\cdot X] \le 0$. We deduce that 
\[ \bE[(Z-Z|_{t})\cdot X] = \bE[(X-X|_{t})\cdot Z] \le 0 \qquad \forall Z \in \ul\cL^1(\R^{d+1}),\]
and $X= X|_{t}$ $\P$-a.s.
\end{proof}


\end{document}